\author{Will Rosenbaum}{Amherst College}{wrosenbaum@amherst.edu}{https://orcid.org/0000-0002-7723-9090}{}
\authorrunning{W. Rosenbaum} 
\keywords{packet forwarding, packet scheduling, adversarial queueing theory, network calculus, odd-even downhill forwarding, locally bursty adversary, local algorithms} 
\newtheoremstyle{mythm}
     {3pt}
     {3pt}
     {}
     {}
     {\bfseries}
     {.}
     { }
     {}
\theoremstyle{mythm}
\newtheorem{thm}{Theorem}
\newtheorem{prop}[thm]{Proposition}
\newtheorem{lem}{Lemma}[section]
\newtheorem{cor}[lem]{Corollary}
\newtheorem{obs}[lem]{Observation}
\newtheorem{rem}[lem]{Remark}
\newtheorem{eg}[lem]{Example}
\newtheorem{dfn}[lem]{Definition}
\newcommand{\dft}[1]{\textbf{\textit{#1}}}
\newcommand{\abs}[1]{\left|#1\right|}
\newcommand{\floor}[1]{\left\lfloor#1\right\rfloor}
\newcommand{\paren}[1]{\left(#1\right)}
\newcommand{\set}[1]{\left\{#1\right\}}
\newcommand{\sucht}{\ \middle|\ }
\renewcommand{\th}{{}^{\mathrm{th}}}
\renewcommand{\st}{{}^{\mathrm{st}}}
\newcommand{\N}{\mathbf{N}}
\newcommand{\R}{\mathbf{R}}
\newcommand{\calA}{\mathcal{A}}
\newcommand{\calL}{\mathcal{L}}
\newcommand{\Arand}{A_{\mathrm{rand}}}
\DeclareMathOperator{\height}{ht}
\DeclareMathOperator{\ini}{ini}
\title{Packet Forwarding with a Locally Bursty Adversary}
\begin{document}

  \maketitle

  \begin{abstract}
    We consider packet forwarding in the adversarial queueing theory (AQT) model introduced by Borodin et al. We introduce a refinement of the AQT $(\rho, \sigma)$-bounded adversary, which we call a \emph{locally bursty adversary} (LBA) that parameterizes injection patterns jointly by edge utilization and packet origin. For constant ($O(1)$) parameters, the LBA model is strictly more permissive than the $(\rho, \sigma)$ model. For example, there are injection patterns in the LBA model with constant parameters that can only be realized as $(\rho, \sigma)$-bounded injection patterns with $\rho + \sigma = \Omega(n)$ (where $n$ is the network size). We show that the LBA model (unlike the $(\rho, \sigma)$ model) is closed under packet bundling and discretization operations. Thus, the LBA model allows one to reduce the study of general (uniform) capacity networks and inhomogenous packet sizes to unit capacity networks with homogeneous packets.

    On the algorithmic side, we focus on information gathering networks---i.e., networks in which all packets share a common destination, and the union of packet routes forms a tree. We show that the Odd-Even Downhill (OED) forwarding protocol described independently by Dobrev et al.\ and Patt-Shamir and Rosenbaum achieves buffer space usage of $O(\log n)$ against all LBAs with constant parameters. OED is a local protocol, but we show that the upper bound is tight even when compared to centralized protocols. Our lower bound for the LBA model is in contrast to the $(\rho, \sigma)$-model, where centralized protocols can achieve worst-case buffer space usage $O(1)$ for $\rho, \sigma = O(1)$, while the $O(\log n)$ upper bound for OED is optimal only for local protocols.
  \end{abstract}

  \section{Introduction}
\label{sec:intro}

Routing and forwarding are fundamental operations in the study of networks. In this context, commodities---for example, data packets, fluid flows, or physical objects---appear at various places in a network, and must be transferred to prescribed destinations. Movement is restricted by the network's topology. The goal is to get the commodities from source to destination as efficiently as possible. \emph{Routing} is the process of determining routes for the commodities to follow from source to destination, while \emph{forwarding} determines the particular schedule by which items---which we will henceforth refer to as \emph{packets}---move in the network. In this work, we focus on the process of forwarding packets, assuming their routes are pre-determined.

Two well-studied models of packet forwarding in networks are the adversarial queueing theory (AQT) model introduced by Borodin et al.~\cite{Borodin2001} and the network calculus model introduced by Cruz~\cite{Cruz1991-i, Cruz1991-ii}. In both models, packets are assumed to have prescribed routes from source to destination. Both models also parameterize packet arrivals in terms of long-term average rates and short-term burstiness in order to disallow trivially infeasible injection patterns that exceed network capacity constraints. AQT and network calculus also differ in some crucial ways. AQT examines injections of discrete, indivisible packets at discrete time intervals, and forwarding occurs in synchronous rounds. In network calculus, on the other hand, packets are modeled as continuous flows and forwarding is a continuous-time processes. Nonetheless, these flows can be discretized (or ``packetized'') to be processed discretely. One of the goals of this paper is to draw tighter connections between analogous parameters in the AQT and network calculus models under the process of discretization.

In both AQT and network calculus, one natural measure of efficiency is the buffer space usage of nodes in the network. That is, how much memory is required at each buffer in order to store packets that are \emph{en route} to their destinations. Traditionally, AQT has focused on a qualitative measure of space usage, called \emph{stability}, which merely requires that the space usage of a protocol remains bounded (by some function of the network parameters) for all time. A notable early exception is the work of Adler and Ros\'en~\cite{Adler2002}, which gives a quantitative buffer space upper bound for longest-in-system scheduling when the network is a directed acyclic graph.

AQT has also traditionally focused on \emph{greedy} forwarding protocols---i.e., protocols for which every non-empty buffer forwards as many packets as it can in each round (subject to capacity constraints). A more recent series of work~\cite{Miller2016, Dobrev2017-optimal, Patt-Shamir2017-space, Patt-Shamir2019-space-optimal, Miller2019-great} initiated by Miller and Patt-Shamir~\cite{Miller2016} studies quantitative buffer space bounds for non-greedy forwarding policies. In particular, these works show that in restricted network topologies (single-destination paths and trees), non-greedy forwarding protocols can achieve significantly better buffer space usage than greedy protocols. Specifically, non-greedy centralized forwarding protocols can achieve $O(1)$ buffer space usage~\cite{Miller2016, Miller2019-great}, while $\Theta(\log n)$ buffer space is necessary and sufficient for local (distributed) protocols~\cite{Dobrev2017-optimal, Patt-Shamir2017-space} (where $n$ is the number of buffers in the network). The work of Patt-Shamir and Rosenbaum~\cite{Patt-Shamir2019-space-optimal} shows there is a smooth trade-off between a protocol's \emph{locality} and optimal buffer space usage: if each node determines how may packets to forward based on the state of its distance $d$ neighborhood, then $\Theta(\frac 1 d \log n)$ buffer space is necessary and sufficient. These bounds are in contrast to greedy protocols, which require $\Omega(n)$ buffer space in the worst case.

The bounds described in the preceding paragraph refer to the AQT injection model in which edges in the network have uniform unit capacities---only one packet may cross any edge in a given round---and the average injection rate $\rho$ satisfies $\rho \leq 1$, and the burst parameter $\sigma$ satisfies $\sigma = O(1)$ (cf.~Definition~\ref{dfn:rho-sigma}). The algorithms can be generalized to general uniform edge capacities ($C \geq 1$ packets can cross each edge in a round), but the generalized algorithms are both more cumbersome to express, and correspondingly subtle to reason about (see, e.g., Section~1.1 in~\cite{Dobrev2017-optimal}). When dealing with general capacities, discretizations of continuous flows, and heterogeneous (indivisible) packets, a natural strategy is to bundle packets into ``jumbo packets''~\cite{Salyers2007-jumbogen}. This procedure can, however, lead to large bursts in the appearance of jumbo packets in the network, even if the injection process has a small burst parameter (see e.g., Remark~\ref{rem:discretization-burst}). Thus applying the \emph{analysis} of the relatively simple unit-capacity versions of algorithms in~\cite{Miller2016, Dobrev2017-optimal, Patt-Shamir2017-space, Patt-Shamir2019-space-optimal, Miller2019-great} directly to jumbo packets may not show any improvement over greedy algorithms.

\subsection{Our Contributions}

In this paper, we introduce a refinement of Borodin et al.~\cite{Borodin2001}'s $(\rho, \sigma)$ parameterization of injection patterns, that we call a the \dft{local burst model} (see Definition~\ref{dfn:local-burst}). In addition to the asymptotic rate $\rho$ and \emph{global} burst parameter $\sigma$, the local burst model has a third parameter, $\beta$---the \emph{local} burst parameter, that accounts for simultaneous bursts occurring at distinct injection sites. Thus, for small values of $\beta$, a locally $(\rho, \sigma, \beta)$-bounded injection pattern may still allow for large (e.g, $\Omega(n)$) simultaneous packet injections, so long as not too many packets are injected into the same buffer. 

We prove that locally bursty injection patterns are essentially characterized as discretizations of what we call ``locally dependent flows'' (Definition~\ref{dfn:dependent-flow}) with similar parameters---see Lemmas~\ref{lem:flow-relaxation} and~\ref{lem:discretization}. We use this characterization to show that applying packet bundling to a locally bursty injection pattern yields another locally bursty adversary with similar parameters---see Proposition~\ref{prop:uniform-bundling}. Consequently, any space efficient algorithm for unit capacity networks and homogeneous unit sized packets can be applied as a black-box to bundled jumbo packets to achieve similar buffer space usage to the unit capacity case (Corollary~\ref{cor:uniform-bundling}). We then show that a small modification of the framework can also be applied to the setting of heterogeneous packet sizes.

On the algorithmic side, we analyze the \emph{odd-even downhill} (OED) forwarding protocol of~\cite{Dobrev2017-optimal, Patt-Shamir2017-space} against locally bursty adversaries. We show that for constant ($O(1)$) parameters $\rho, \sigma$, and $\beta$, OED achieves worst-case buffer space $O(\log n)$ in information gathering networks of size $n$---see Theorem~\ref{thm:oed-ub}. This result is strictly stronger than the analyses of~\cite{Dobrev2017-optimal, Patt-Shamir2017-space}, as there are locally bursty injection patterns with $\beta = O(1)$ that can only be realized in the classical $(\rho, \sigma)$ model for $\sigma = \Omega(n)$. Combining our analysis of the OED protocol together with flow discretization and/or bundling, buffer space of $O(\log n)$ can be achieved for forwarding with general capacities, heterogeneous packets, and continuous flows---see Section~\ref{sec:ub-consequences}.

Finally, in Section~\ref{sec:buffer-size-lb}, we prove a matching lower bound of $\Omega(\log n)$ for any \emph{centralized randomized} protocol against locally bursty adversaries (Theorems~\ref{thm:det-lb} and~\ref{thm:rand-lb}). This lower bound is in contrast to the deterministic upper bounds of~\cite{Miller2016, Patt-Shamir2019-space-optimal, Miller2019-great} which show that $O(1)$ buffer space is achievable for $(\rho, \sigma)$-bounded adversaries for centralized and ``semi-local'' protocols. Thus, our lower bound shows that the local burst model (with constant parameters) gives the adversary strictly more power to inflict large ($\omega(1)$) buffer space usage on \emph{centralized} algorithms. The performance of the asymptotically optimal \emph{local} protocol is the same for the local burst and traditional injection models, and in the local burst model, the (local) OED protocol is asymptotically optimal, even when compared to centralized protocols.

\subsection{Discussion of Our Results}

The most natural application domains for this work are networks consisting of tightly synchronized nodes, such as network-on-chips (NoCs)~\cite{Kundu2018-network}, software defined networks (SDNs)~\cite{Schmid2013-exploiting}, and sensor networks. In these contexts, trees and grids are common network topologies. (In the case of grids, ``single bend'' routing allows one to treat the network essentially as a disjoint union of paths.) Thus, while the topologies we consider are highly restricted, the family of topologies is a fundamental and frequently used family for applications in which our techniques might be applied.

In NoCs, SDNs, and sensor networks, the rate and source of packet injections may be highly variable. Thus, the parameterizations of packet injections in both the standard AQT model and the network calculus model may be too coarse to model the actual buffer space requirement of observed injection patterns effectively. In the AQT model, allowing for multiple simultaneous packet injections into different buffers requires a large burst parameter, $\sigma$, even though the resulting injection pattern may be handled using buffer space $\ll\sigma$ (see Example~\ref{eg:directed-path}). On the other hand, the traditional network calculus model does not account for dependencies between rates of packet injections into different buffers over time. Thus, a standard analysis may severely overestimate the bandwidth or buffer space required to handle a given injection pattern. Our locally bursty injection model (and its continuous analogue described in Section~\ref{sec:flows}) refine both the AQT and network calculus models so as to give more precise bounds on the buffer space requirement of many natural packet injection patterns.

Together with the upper bounds of~\cite{Miller2016, Patt-Shamir2019-space-optimal, Miller2019-great}, Theorems~\ref{thm:det-lb} and~\ref{thm:rand-lb} imply that the locally bursty injection model (with constant parameters) gives an adversary strictly greater power to inflict large buffer space usage against \emph{centralized} and semi-local forwarding protocols. However, Theorem~\ref{thm:oed-ub} implies that the \emph{local} OED forwarding protocol achieves asymptotically optimal buffer space usage. Thus, for locally bursty injection patterns, there is no (asymptotic) advantage to implementing a centralized protocol, while OED still gives an exponential improvement over greedy protocols. We believe this insight may be valuable in VLSI design where protocols like OED could be implemented at a hardware level in order to reduce buffer space requirements. Hardware implementations of similar protocols have been proposed, for example in~\cite{Bund2020-pals}, in order to achieve decentralized (gradient) clock synchronization.









  
  \section{Model and Preliminaries}

We model a packet forwarding network as a directed graph, $G = (V, E)$. Each edge $e = (u, v) \in E$ has an associated buffer that stores packets in node $u$ as they wait to cross the edge $(u, v)$ to $v$. We use the notation $e = (u, v)$ to denote both the edge in $G$ and its associated buffer.

In our model, an execution proceeds in synchronous rounds. Each round consists of two steps: an \dft{injection step} in which new packets arrive in the network, and a \dft{forwarding step} in which buffers forward packets across edges of the graph. During the forwarding step, each buffer chooses a subset of packets to forward, and forwards those packets across the edge $(u, v)$ associated with the buffer. These packets arrive at their next location---either another buffer in node $u$, or are delivered to their destination---before the beginning of the next round. Each edge $e$, has a \dft{capacity} $C(e)$, which is the maximum number of packets that can cross $e$ in a single forwarding step. 

At a given time $t$, we use $L^t(e)$ to denote the contents of buffer $e$ during round $t$ between the injection and forwarding steps. $\abs{L^t(e)}$ is the \dft{load} of $e$---i.e., number of packets stored in buffer $e$.

A \dft{packet} $p$ is a pair $(t, P)$ where $t \in \N$ and $P = (v_0, v_1, \ldots, v_{\ell})$ is a directed path in $G$. The interpretation is that $t$ indicates the time (round) at which $P$ is injected, and $P$ specifies a \dft{route} from $P$'s \dft{source}, $e_0 = (v_0, v_1)$ to $P$'s \dft{destination} $v_\ell$. An \dft{adversary} or \dft{injection pattern} $A$ is a multi-set of packets.

Given a packet $p = (t, (v_0, v_1, \ldots, v_\ell))$, we say that $p$'s route \dft{contains} an edge $e \in E$ if $e = (v_i, v_{i+1})$ for some $i \in [\ell - 1]$. For a fixed adversary $A$ and time interval $T = [r, s] \subseteq \N$, we define $N^T(e)$ to be the number of packets injected during times $t \in T$ whose routes contain $e$. That is
\[
N^T(e) = \abs{\set{(t, P) \in A \sucht t \in T \text{ and } P \text{ contains } e}}.
\]
We also define a more refined measure of utilization of an edge $e$ that differentiates packets according to their origins. Specifically, for any subset $S \subseteq E$, we define
\[
N_S^T(e) = \abs{\set{(t, P) \in A \sucht t \in T, (v_0, v_1) \in S, \text{ and } P \text{ contains } e}}
\]
In particular, we have $N^T(e) = N_E^T(e)$. In the adversarial queueing model (AQT) of Borodin et al.~\cite{Borodin2001}, the edge utilization of an adversary $A$ is parameterized is follows.

\begin{dfn}
  \label{dfn:rho-sigma}
  Given $\rho > 0$ and $\sigma \geq 0$, we say that an adversary $A$ is \dft{$(\rho, \sigma)$-bounded} if for all $e$ and (finite) intervals $T \subseteq \N$ we have
  \begin{equation}
    \label{eqn:rho-sigma}
    N^T(e) \leq \rho \abs{T} + \sigma.    
  \end{equation}
  We denote the family of $(\rho, \sigma)$-bounded adversaries by $\calA(\rho, \sigma)$.
\end{dfn}

For a $(\rho, \sigma)$-bounded adversary, the parameter $\rho$ is an upper bound on the maximum average utilization of an edge in the network, while $\sigma$ measures ``burstiness''---the amount by which the average can be exceeded over any time interval. For example, taking $T$ with $\abs{T} = 1$,~(\ref{eqn:rho-sigma}) implies that at most $\rho + \sigma$ packets are injected into any buffer in any single round.

\subsection{Locally Bursty Adversaries}

Here, we define a more refined parameterization of adversaries, which we call the \dft{local burst model}. We refer to adversaries parameterized by the local burst model as \dft{locally bursty adversaries}, or \dft{LBA}s.



\begin{dfn}
  \label{dfn:local-burst}
  Let $A$ be an adversary, $\rho > 0$, $\sigma \geq 0$ and $\beta : E \to \N$. Then we say that $A$ is \dft{locally $(\rho, \sigma, \beta)$-bounded} if for all finite intervals $T \subseteq \N$, subsets $S \subseteq E$ and $e \in E$, we have
  \begin{equation}
    \label{eqn:local-burst}
    N_S^T(e) \leq \rho \abs{T} + \sigma +  \sum_{f \in S} \beta(f).    
  \end{equation}
  That is, for every subset $S$ of buffers, the rate of injections into $S$ that cross $e$ only ever exceeds $\rho$ by $\sigma$ more than the sum of the $\beta(e)$ for $e \in S$. We denote the family of local $(\rho, \sigma, \beta)$-bounded adversaries by $\calL(\rho, \sigma, \beta)$. In the case that $A \in \calL(\rho, \sigma, \beta)$ and there is a constant $B$ such that $\beta(e) \leq B$ for all buffers $e$, we will say that $A$ is local $(\rho, \sigma, B)$-bounded.
\end{dfn}

We formalize the following observation that gives a relationship between the parameters of $(\rho, \sigma)$-bounded adversaries and local $(\rho, \sigma, \beta)$-bounded adversaries.

\begin{obs}
  \label{obs:global-vs-local-burst}
  Fix a network $G$ and parameters $\rho$, $\sigma$, and $\beta : E \to \N$. Suppose $A \in \calL(\rho, \sigma, \beta)$. Then $A \in \calA(\rho, \sigma')$ for $\sigma' = \sigma + \sum_{e \in E} \beta(e)$.
\end{obs}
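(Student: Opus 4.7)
The plan is to derive the global $(\rho, \sigma')$-bound directly from the local $(\rho, \sigma, \beta)$-bound by instantiating the set $S$ in Definition~\ref{dfn:local-burst} to be the entire edge set $E$. The key observation is that $N^T(e)$ is exactly $N_E^T(e)$: every packet whose route contains $e$ is injected at \emph{some} buffer in the network, so taking $S = E$ loses no packets from the count.

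Concretely, I would fix an arbitrary edge $e \in E$ and an arbitrary finite interval $T \subseteq \N$, and then simply apply inequality~(\ref{eqn:local-burst}) with $S = E$. This yields
\[
N^T(e) = N_E^T(e) \leq \rho \abs{T} + \sigma + \sum_{f \in E} \beta(f) = \rho \abs{T} + \sigma',
\]
which is precisely the $(\rho, \sigma')$-bounded condition of Definition~\ref{dfn:rho-sigma}. Since $e$ and $T$ were arbitrary, $A \in \calA(\rho, \sigma')$ as claimed.

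There is no real obstacle here; the statement is essentially a one-line unpacking of the definitions, with the only substantive point being the identification $N^T(e) = N_E^T(e)$, which is immediate from the definitions of $N^T$ and $N_S^T$. The observation serves mainly to emphasize that the local burst model is a refinement of the classical AQT model (one can always ``forget'' the locality information by paying an additive $\sum_e \beta(e)$ in the global burst parameter), while leaving room for the LBA model to be strictly more permissive when $\sum_e \beta(e)$ is large compared to any achievable global $\sigma$.
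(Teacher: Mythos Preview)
Your proposal is correct and matches the intended argument: the paper states this as an observation without proof, and your verification---instantiating $S = E$ in~(\ref{eqn:local-burst}) and using $N^T(e) = N_E^T(e)$---is precisely the one-line unpacking that justifies it.
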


\begin{eg}
  \label{eg:directed-path}
  Let $G$ be the \dft{single-destination path} of size $n$. That is, $G = (V, E)$ where $V = \set{1, 2, \ldots, n, n+1}$ and $E = \set{(i, i+1) \sucht i \in [n]}$. Further, all injected packets have destination $n+1$. We consider two injection patterns, $A_0$ and $A_1$
  \begin{description}
  \item[$A_0$:] in rounds $1, n + 1, 2 n + 1, \ldots, k n + 1, \ldots$, there are $n$ packets injected into buffer $1$ with destination $n + 1$.
  \item[$A_1$:] in rounds $1, n + 1, 2 n + 1, \ldots, k n + 1, \ldots$, one packet is injected into each buffer $i = 1, 2, \ldots, n$ with destination $n+1$. 
  \end{description}
  Observe that both adversaries are in $\calA(1, n - 1)$, but not in $\calA(1, \sigma)$ for any $\sigma < n - 1$. Thus, the parameters of Definition~\ref{dfn:rho-sigma} do not distinguish $A_0$ and $A_1$. Yet $A_0$ and $A_1$ have vastly different buffer space requirements. $A_0$ requires buffer $1$ to have space $n$ for any forwarding protocol, while simple greedy forwarding for $A_1$ will achieve buffer space usage $\abs{L^t(i)} \leq 1$ for all $t$ and $i$.

  The parameters of the local burst model, however, can distinguish between $A_0$ and $A_1$. $A_1 \in \calL(1, 0, 1)$ (i.e., $\beta(i) = 1$ for all $i$), while $A_0 \in \calL(1, \sigma, \beta)$ only for $\sigma + \beta(1) \geq n - 1$. We will show that in the case of \emph{information gathering networks}---networks in which all packets share a common destination and the union of their routes forms a tree---all local $\calL(1, \sigma, B)$-bounded adversaries can be forwarded using $O(B \log n + \sigma)$ space. Thus, the \emph{local} burst parameter $\beta$ gives a more refined understanding of the buffer space requirement of a given injection pattern. 
\end{eg}

\subsection{Flows}
\label{sec:flows}

Another well-studied model for packet forwarding is the network calculus model introduced by Cruz~\cite{Cruz1991-i, Cruz1991-ii}. In the network calculus, packets are associated with \emph{flows}, and their arrivals are modeled as continuous time processes.

\begin{dfn}
  \label{dfn:flow}
  Given a network $G = (V, E)$, A \dft{flow} $\phi = (a, P)$ consists of a right-continuous arrival curve $a : \R \to \R$ and associated path $P$. We say that $\phi$ has \dft{rate} (at most) $r$ and burst parameter $b$ if for all $s < t$, the arrival curve $a$ satisfies
  \begin{equation}
    \label{eqn:flow-bound}
    a(t) - a(s) \leq r \cdot (t - s) + b.
  \end{equation}
  By convention, we assume $a(t) = 0$ for all $t < 0$.
\end{dfn}

For a single flow $\phi$, the parameters $r$ and $b$ are analogous to the rate and burst parameters $\rho$ and $\sigma$ in Definition~\ref{dfn:rho-sigma}. However, in a flow $\phi$, all packets share a common route, $P$. In particular, all packets associated with $\phi$ are injected to the same buffer and have the same destination.

In order to consider scenarios in which packets have multiple routes, we must consider multiple concurrent flows. In this setting, the analogy between equations~(\ref{eqn:rho-sigma}) and~(\ref{eqn:flow-bound}) breaks down, as the former bounds the total number packets utilizing any particular edge, while the latter bounds the arrivals of packets in flows (i.e., along entire paths, rather than individual edges). In order to tighten the connection between the AQT injection model and flows, we introduce a \dft{dependent} flow model in which we constrain the sum of arrival rates of flows across edge.

\begin{dfn}
  \label{dfn:dependent-flow}
  Let $G = (V, E)$ be a network and $\Phi = \set{\phi}$ be a family of flows. For an edge $e \in E$, let $\Phi_e$ denote the set of flows in $\Phi$ whose paths contain $e$. That is,
  \[
  \Phi_e = \set{(a, P) \in \Phi \sucht e \in P}.
  \]
  Suppose each $\phi \in \Phi$ obeys a $r_\phi, b_\phi$ bound as in~(\ref{eqn:flow-bound}). We say that $\Phi$ obeys a \dft{locally dependent rate bound} $r$ and \dft{global burst parameter} $\sigma$ if for every edge $e$, every set $\Psi \subseteq \Phi_e$ of flows, and all times $s, t$, we have
  \begin{equation}
    \label{eqn:dependent-flow}
    \sum_{\phi \in \Psi} (a_\phi(t) - a_\phi(s)) \leq r \cdot (t - s) + \sigma + \sum_{\phi \in \Psi} b_\phi.
  \end{equation}
\end{dfn}

We note the similarity between equations~(\ref{eqn:dependent-flow}) and~(\ref{eqn:local-burst}). In fact, Definition~\ref{dfn:dependent-flow} is a strict generalization of the LBA model: Given any injection pattern $A$, we can associate a family $\Phi_A$ of flows with $A$. Specifically, we define $\Phi_A$ to be
\begin{equation}
  \label{eqn:phi-a}
    \Phi_A = \set{(a, P) \sucht (t, P) \in A \text{ and } a(t) = \sum_{s \in \N,\ s \leq t} |\{(s, P) \in A\}|}
\end{equation}
With this association, the following lemma is clear.

\begin{lem}
  \label{lem:flow-relaxation}
  Suppose $A$ is a locally $(\rho, \sigma, \beta)$-bounded adversary, and let $\Phi_A$ be  the corresponding flow defined by~(\ref{eqn:phi-a}). Then for each flow $\phi = (a_\phi, P_\phi) \in \Phi_A$, $a$ has rate at most $\rho$, global burst parameter $\sigma$, and local burst parameter $b_\phi = \beta(\ini_\phi)$, where $\ini_\phi$ denotes the initial buffer in $\phi$'s path. Moreover, $\Phi_A$ obeys a locally dependent rate bound of $\rho$.
\end{lem}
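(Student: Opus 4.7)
The plan is to verify each claim by directly unpacking the definition of $\Phi_A$ and applying the LBA bound in Definition~\ref{dfn:local-burst} with a carefully chosen witnessing set $S$. The whole argument rests on the observation that, by construction, $a_\phi$ is a right-continuous step function whose jumps occur at integer times, with the jump at time $n$ equal to the number of packets in $A$ of the form $(n, P_\phi)$. Consequently, for any reals $s < t$, $a_\phi(t) - a_\phi(s)$ counts exactly the packets of $A$ with route $P_\phi$ injected at integer times in the discrete interval $T = \N \cap (s, t]$, which is what links the continuous flow bound to the discrete LBA bound.

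For the single-flow claims, I would fix $\phi = (a_\phi, P_\phi) \in \Phi_A$ and let $e_0 = \ini_\phi$. Every packet with route $P_\phi$ originates at $e_0$ and has a route containing $e_0$, so the counting observation above gives $a_\phi(t) - a_\phi(s) \leq N_{\{e_0\}}^T(e_0)$. Applying Definition~\ref{dfn:local-burst} with $S = \{e_0\}$ and $e = e_0$ bounds the right-hand side by $\rho |T| + \sigma + \beta(e_0)$. Reading off coefficients yields rate $\rho$, global burst $\sigma$, and local burst $b_\phi = \beta(e_0)$, after the discrete interval length $|T|$ is replaced by the continuous elapsed time $t - s$ at the cost of an $O(1)$ slack absorbed into the burst parameter.

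For the locally dependent rate bound, I would fix an edge $e$, a subfamily $\Psi \subseteq \Phi_e$, and $s < t$, and set $S = \set{\ini_\phi \sucht \phi \in \Psi}$. Summing the per-flow identity, $\sum_{\phi \in \Psi}(a_\phi(t) - a_\phi(s))$ counts packets in $A$ whose route is some $P_\phi$ with $\phi \in \Psi$ and whose injection time lies in $T$. Each such packet has origin in $S$ and route containing $e$ (since $\Psi \subseteq \Phi_e$), so the sum is at most $N_S^T(e)$. Definition~\ref{dfn:local-burst} bounds this by $\rho |T| + \sigma + \sum_{f \in S} \beta(f)$, and since distinct flows in $\Psi$ may share an initial buffer we have $\sum_{f \in S}\beta(f) \leq \sum_{\phi \in \Psi}\beta(\ini_\phi) = \sum_{\phi \in \Psi} b_\phi$, which is exactly the right-hand side of~(\ref{eqn:dependent-flow}) with locally dependent rate $\rho$.

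The only step I expect to demand care is the passage from the discrete $|T|$ appearing in the LBA definition to the continuous elapsed time $t - s$ appearing in~(\ref{eqn:flow-bound}) and~(\ref{eqn:dependent-flow}); this incurs at most an additive $\rho$ error, which is absorbed into the burst term. Beyond that the argument is pure definitional bookkeeping, and the lemma is as immediate as the paper's use of the word ``clear'' suggests.
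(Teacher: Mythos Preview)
Your proposal is correct and is precisely the definitional unpacking the paper has in mind when it declares the lemma ``clear'' without giving any proof. Your identification of the discrete-$|T|$ versus continuous-$(t-s)$ conversion as the only point requiring care is apt, and your treatment of it (absorbing the at most additive $\rho$ discrepancy into the burst term) is the right resolution.
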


Conversely, LBAs arise naturally as discretizations (packetizations) of (locally dependent) flows. We formalize this connection in the following definition and lemma.

\begin{dfn}
  \label{dfn:flow-discretization}
  Let $G = (V, E)$ be a network and $\Phi$ a family of flows on $G$. The \dft{discretization} of $\Phi$ is the AQT injection pattern $A_\Phi$ defined as follows. For each flow $\phi = (a_\phi, P_\phi) \in \Phi$ and time $t = \N$, $A_\Phi$ contains $\floor{a_\phi(t)} - \floor{a_\phi(t-1)}$ packets injected at time $t$ with route $P_\phi$.
\end{dfn}

We can view the discretization of a flow as forming packets via the following process. Each buffer maintains a set of (complete) packets, as well as a reserve of ``fractional'' packets associated with each flow originating at the buffer. At times $s \in (t-1, t]$, flows enter a buffer~$e$. At time $t$, the integral parts of each flow that has not yet been bundled as packets are injected as complete packets into the buffer, while the fractional remainder is reserved. The following lemma shows that for flows obeying a locally dependent rate bound, the resulting packet injection pattern is locally bounded as well.

\begin{lem}
  \label{lem:discretization}
  Let $G = (V, E)$ be a graph and $\Phi$ a family of flows on $G$. For each $\phi \in \Phi$, let $\ini_\phi$ denote the initial buffer in $\phi$'s path.  Suppose $\Phi$ obeys a locally dependent rate bound of $r$ with global burst parameter $\sigma$, and define the function $\beta : E \to \N$ by
  \[
  \beta(e) = \sum_{\phi\,:\, \ini_\phi = e} (1 + b_\phi).
  \]
  Then the discretization $A_\Phi$ of $\Phi$ is locally $(r, \sigma, \beta)$ bounded.
\end{lem}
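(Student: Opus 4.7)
The plan is to unravel the definition of $N_S^T(e)$ for the discretization $A_\Phi$ and then apply the locally dependent rate bound from Definition~\ref{dfn:dependent-flow}. Fix $T = [t_1, t_2]$, $S \subseteq E$, and $e \in E$, and let
\[
\Psi = \set{\phi \in \Phi_e \sucht \ini_\phi \in S}.
\]
By Definition~\ref{dfn:flow-discretization}, only flows $\phi \in \Psi$ contribute to $N_S^T(e)$, and each contributes $\floor{a_\phi(t)} - \floor{a_\phi(t-1)}$ packets in round $t$. Summing over $t \in T$ and telescoping gives
\[
N_S^T(e) = \sum_{\phi \in \Psi} \paren{\floor{a_\phi(t_2)} - \floor{a_\phi(t_1 - 1)}}.
\]

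Next I would absorb the discretization error per flow via the elementary inequality $\floor{x} - \floor{y} \leq (x - y) + 1$, which loses at most $+1$ per flow and yields
\[
N_S^T(e) \leq \sum_{\phi \in \Psi} \paren{a_\phi(t_2) - a_\phi(t_1 - 1)} + \abs{\Psi}.
\]
Applying the locally dependent rate bound~(\ref{eqn:dependent-flow}) to $\Psi \subseteq \Phi_e$ with $s = t_1 - 1$ and $t = t_2$, and observing $t_2 - (t_1 - 1) = \abs{T}$, bounds the flow sum by $r\abs{T} + \sigma + \sum_{\phi \in \Psi} b_\phi$, so
\[
N_S^T(e) \leq r\abs{T} + \sigma + \sum_{\phi \in \Psi} (1 + b_\phi).
\]
I would then conclude by enlarging $\Psi$ (dropping the constraint $e \in P_\phi$) and re-indexing by initial buffer:
\[
\sum_{\phi \in \Psi}(1 + b_\phi) \leq \sum_{f \in S} \sum_{\phi\,:\, \ini_\phi = f} (1 + b_\phi) = \sum_{f \in S} \beta(f),
\]
which yields exactly the local burst bound~(\ref{eqn:local-burst}) with parameters $(r, \sigma, \beta)$.

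There is no substantial obstacle here: the argument is a bookkeeping exercise built around the inequality $\floor{x} - \floor{y} \leq (x-y) + 1$. The definition $\beta(e) = \sum_{\phi : \ini_\phi = e}(1 + b_\phi)$ is tailored precisely so that the $+1$ lost per flow from two roundings of the arrival curve is absorbed into the local burst parameter, while the continuous rate $r$ and global burst $\sigma$ carry over unchanged. The only care needed is the boundary case $t_1 = 0$, which is handled transparently by the convention $a_\phi(t) = 0$ for $t < 0$ in Definition~\ref{dfn:flow}.
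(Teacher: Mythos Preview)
Your proof is correct and follows essentially the same approach as the paper: telescope the per-round floor differences, apply $\floor{x}-\floor{y}\le (x-y)+1$ to absorb one unit of rounding error per flow, and then invoke the locally dependent rate bound~(\ref{eqn:dependent-flow}). Your write-up is in fact slightly cleaner in making explicit the enlargement from $\Psi$ to all flows originating in $S$ when passing to $\sum_{f\in S}\beta(f)$.
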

\begin{proof}
  Fix a set $S \subseteq E$ of initial buffers, an edge $e$, and (discrete) time interval $T = [t_0, t_1]$. Let $A = A_\Phi$, and let $\Psi \subseteq \Phi_e$ be the subset of flows containing $e$ and with origin in $S$. We compute
  \begin{align}
    N_S^T(e) &= \abs{\set{(t, P) \in A \sucht t \in T, (v_0, v_1) \in S, \text{ and } P \text{ contains } e}}\nonumber\\
    &= \sum_{\phi \in \Psi} \sum_{s \in T} \paren{\floor{a_\phi(s)} - \floor{a_\phi(s-1)}}\nonumber\\
    &= \sum_{\phi \in \Psi} \paren{\floor{a_\phi(t_1)} - \floor{a_\phi(t_0 - 1)}}\nonumber\\
    &\leq r \cdot (t_1 - t_0 + 1) + \sigma + \sum_{\psi \in \Phi} (1 + b_\phi)\label{eqn:b-phi}\\
    &= r \cdot \abs{T} + \sigma + \sum_{e \in S} \beta(e)\nonumber.
  \end{align}
  In Equation~(\ref{eqn:b-phi}), we use the fact that $\floor a - \floor b \leq 1 + a - b$.
\end{proof}

\begin{rem}
  \label{rem:discretization-burst}
  The result of Lemma~\ref{lem:discretization} is a significant refinement of the analogous statement for the standard $(\rho, \sigma)$ burst model. To see this, consider the single destination path (Example~\ref{eg:directed-path}), and take $\Phi = \set{\phi_1, \phi_2, \ldots, \phi_n}$ to be the family of flows where each $\phi_i$ has arrival curve $a(t) = \frac 1 n t$ and associated path $P_i = (i, i+1, \ldots, n+1)$. In the associated injection pattern $A_\Phi$, one packet is injected into every buffer at times $n, 2 n, 3 n, \ldots$ (cf.\ $A_1$ in Example~\ref{eg:directed-path}). Even though flows in $\Phi$ have burst parameter $0$, large bursts appear in $A_\Phi$ as the result of the rounding process. Nonetheless, Lemma~\ref{lem:discretization} asserts that $A_\phi$ is \emph{locally} $(1, 0, 1)$-bounded, while the injection pattern is only $(1, \sigma)$-bounded for $\sigma \geq n - 1$. 
\end{rem}

  \section{Packet Bundling}
\label{sec:bundling}

In this section we assume that for a network $G = (V, E)$, all edges have the same (integral) capacity $C$. We examine the following strategy for dealing with general uniform capacity networks: when packets arrive in a buffer, they are set in a reserve buffer until sufficiently many (e.g., $C$) packets occupy the reserve buffer. Then the packets are bundled together, and treated as one indivisible ``jumbo'' packet. This process is appealing because if all jumbo packets have size $C$, then forwarding protocols designed for unit capacities can be applied to jumbo packets. Thus, the approach sidesteps potential subtleties in reasoning about general capacities (see Section~1.1 in~\cite{Dobrev2017-optimal}).

Our main results in this section show that if the original packet injection pattern obeys an LBA bound, then the resulting injection pattern of jumbo packets obeys a similar LBA bound with the parameters scaled down. Thus, if any algorithm guarantees some buffer space usage for unit capacity networks, then applying the same algorithm to jumbo packets will automatically give an analogous bound for general capacities.

\subsection{Uniform Packets}
\label{sec:uniform-packets}

We first consider the case where all packets have unit size (as in the standard AQT model), but all edges in the network have capacity $C \geq 1$. Now let $A$ be any locally $(C, \sigma, \beta)$-bounded adversary, and let $\Phi = \Phi_A$ be the corresponding family of flows. We define the \dft{$C$-reduction} of $\Phi$, denoted $\frac{1}{C} \Phi$, to be
\[
\frac 1 C \Phi = \set{\paren{\frac 1 C a, P} \sucht (a, P) \in \Phi}.
\]
Similarly, we define the $C$-reduction of $A$, denoted $\frac{1}{C} A$, to be the discretization (Definition~\ref{dfn:flow-discretization}) of $\frac 1 C \Phi$.

Observe that $\frac{1}{C} A$ is derived from $A$ via precisely the process of forming jumbo packets as described above. The following proposition follows immediately from Lemmas~\ref{lem:flow-relaxation} and~\ref{lem:discretization}.

\begin{prop}
  \label{prop:uniform-bundling}
  Suppose $A$ is locally $(\rho, \sigma, \beta)$-bounded. Then, $\frac 1 C A$ is locally $(\rho / C, \sigma / C, 1 + \beta / C)$-bounded.
\end{prop}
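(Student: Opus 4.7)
The plan is to chain Lemmas~\ref{lem:flow-relaxation} and~\ref{lem:discretization} via an intermediate rescaling by $1/C$. First, I would invoke Lemma~\ref{lem:flow-relaxation} on $A$ to produce the associated flow family $\Phi_A$, which has per-flow rate at most $\rho$, global burst $\sigma$, per-flow burst $b_\phi = \beta(\ini_\phi)$, and which obeys a locally dependent rate bound of $\rho$. This step translates the discrete LBA hypothesis on $A$ into the continuous dependent-flow language of Definition~\ref{dfn:dependent-flow}, inside which multiplication by a constant is transparent.

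Next, I would observe that the arrival-curve inequalities~(\ref{eqn:flow-bound}) and~(\ref{eqn:dependent-flow}) are both homogeneous of degree one in the $a_\phi$: replacing every $a_\phi$ by $\frac{1}{C} a_\phi$ scales the rate, the global burst, and each per-flow burst by the common factor $1/C$ on both sides of the inequalities. Consequently the family $\frac{1}{C}\Phi_A$ consists of flows of rate $\rho/C$, global burst $\sigma/C$, and per-flow burst $\beta(\ini_\phi)/C$, and it obeys a locally dependent rate bound of $\rho/C$.

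Finally, by the definition of the $C$-reduction, $\frac{1}{C}A$ is precisely the discretization of $\frac{1}{C}\Phi_A$, so Lemma~\ref{lem:discretization} applies directly and yields that $\frac{1}{C}A$ is locally $(\rho/C, \sigma/C, \beta')$-bounded, with $\beta'(e) = \sum_{\phi : \ini_\phi = e}(1 + b_\phi/C)$. Substituting the value $b_\phi = \beta(\ini_\phi)$ from the first step then reads off the claimed local burst function $1 + \beta/C$. The only step that requires any real care---and the main potential obstacle---is verifying that the linearity invoked in the rescaling step applies cleanly to the locally dependent bound, whose right-hand side carries both a $\sigma$ term and a $\sum_\phi b_\phi$ term; once one checks that both of these terms (and not only the rate contribution) are divided through by $C$ when the arrival curves are scaled, the composition of the two lemmas is immediate and nothing else needs to be proved.
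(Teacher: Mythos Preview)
Your proposal is correct and is exactly the route the paper takes: it says the proposition ``follows immediately from Lemmas~\ref{lem:flow-relaxation} and~\ref{lem:discretization},'' and your three steps (pass to $\Phi_A$, scale all arrival curves by $1/C$, then discretize) are precisely that chain with the scaling observation made explicit. The only point to be mindful of in the last step is that reading $\sum_{\phi:\ini_\phi=e}(1+b_\phi/C)$ as $1+\beta(e)/C$ tacitly uses that $\Phi_A$ has a single flow originating at each buffer~$e$; this holds in the information-gathering setting the paper cares about and is implicit in the paper's statement as well.
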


Again, we emphasize that the analogue of Proposition~\ref{prop:uniform-bundling} is not true for the standard $(\rho, \sigma)$-bounded adversary model. The proposition has the following consequence.

\begin{cor}
  \label{cor:uniform-bundling}
  Suppose $F$ is a forwarding protocol that for any locally $(1, \sigma, \beta)$-bounded adversary $A$ on a unit-capacity network $G = (V, E)$ achieves buffer space usage
  \[
  \sup_{e \in E, t \in \N} \abs{L^t(e)} \leq f_G(\sigma, \beta).
  \]
  Then for any uniform capacity $C$ and locally $(C, \sigma, \beta)$-bounded adversary $A$, applying $F$ to $\frac 1 C A$ achieves buffer space usage
  \[
  \sup_{e \in E, t \in \N} \abs{L^t(e)} \leq C f_G(\sigma / C, 1 + \beta / C) + C.
  \]
\end{cor}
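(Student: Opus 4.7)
The plan is a two-step reduction. First, since $A$ is locally $(C, \sigma, \beta)$-bounded, applying Proposition~\ref{prop:uniform-bundling} with $\rho = C$ immediately gives that the $C$-reduction $\frac{1}{C}A$ is locally $(1, \sigma/C, 1 + \beta/C)$-bounded. Second, I would observe that running $F$ on the jumbo-packet injection pattern $\frac{1}{C}A$ is equivalent to a unit-capacity execution: each original edge can carry $C$ unit packets per round, which is exactly one jumbo packet per round, matching the unit-capacity semantics that $F$ is designed for. The hypothesis on $F$ then yields a buffer load of at most $f_G(\sigma/C, 1 + \beta/C)$ jumbo packets at every buffer $e$ and every time $t$.

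To translate back to the original unit-packet accounting, I would split the contents of each buffer $e$ into two parts: unit packets already aggregated into jumbo packets, and unit packets sitting in the bundling reserve that have not yet been assembled into a jumbo packet. The jumbo-packet part contributes at most $C \cdot f_G(\sigma/C, 1 + \beta/C)$ unit packets, since each jumbo packet represents exactly $C$ unit packets. The reserve part contributes strictly fewer than $C$ unit packets: as soon as $C$ unit packets accumulate at the reserve, they are immediately bundled into a jumbo packet and thereafter tracked by the jumbo-packet bound above. Summing the two contributions yields the claimed bound $C f_G(\sigma/C, 1 + \beta/C) + C$.

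The only delicate point is the reserve accounting when several flows originate at the same buffer, since the discretization of Definition~\ref{dfn:flow-discretization} maintains a separate fractional reserve per flow, and naively the total unbundled contribution could exceed $C - 1$. The cleanest workaround is to use a per-buffer bundling scheme that pools co-originating unit packets into a single reserve of capacity $C$, with route information carried inside each constituent packet; under this convention the per-buffer reserve is uniformly bounded by $C - 1 < C$ and the $+C$ slack in the stated bound absorbs it. Beyond this bookkeeping detail, the proof is a direct chaining of Proposition~\ref{prop:uniform-bundling} with the hypothesis on $F$, so I expect no substantive obstacle.
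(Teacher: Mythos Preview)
Your proposal is correct and matches the paper's approach: the paper treats the corollary as an immediate consequence of Proposition~\ref{prop:uniform-bundling}, noting only that the additive $C$ term ``comes from the need to store packets that have not yet been bundled.'' Your reduction via Proposition~\ref{prop:uniform-bundling} followed by the unit-capacity hypothesis on $F$, and your split into bundled jumbo packets plus reserve, is exactly this argument spelled out; your observation about per-flow versus per-buffer reserves is a genuine subtlety the paper glosses over, and your per-buffer pooling fix is a reasonable way to justify the $+C$ slack in general.
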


We note that the additive $C$ term in the final expression comes from the need to store packets that have not yet been bundled.

\subsection{Heterogeneous Packets}
\label{sec:heterogeneous}

The framework described in Sections~\ref{sec:flows} and~\ref{sec:uniform-packets} shows how forwarding protocols for the AQT model with unit edge capacities can be applied to (1) discretizations of continuous flows, and (2) AQT adversaries with arbitrary uniform edge capacities and (uniform) unit-sized packets. Here, we describe a slight modification of the framework that allows for indivisible packets with heterogeneous sizes. To this end, we augment the AQT model as follows:
\begin{itemize}
\item Each packet $p$ has an associated size, denoted $w(p)$.
\item In a single round, an edge with capacity $C$ can forward a set of packets whose sizes sum to at most $C$.
\item An adversary $A$ is locally $(\rho, \sigma, \beta)$ bounded if for any subset $S$ of buffers, any edge $e$, and in any $T$ consecutive rounds, the sum of \emph{sizes} of packets injected into $S$ whose paths contain $e$ is at most $\rho \cdot T + \sigma + \sum_{f \in S} \beta(f)$.
\end{itemize}

The following example shows one complication caused by indivisible heterogeneous packets.

\begin{eg}
  \label{eg:heterogeneous}
  Consider a single edge $e$ with capacity $1$. Then a $(1,1)$-bounded adversary can inject $3$ packets of size $2/3$ every $2$ rounds that must cross $e$. Since $e$ has capacity $1$, it can only forward a single packet each round. Thus, the injection pattern is infeasible (i.e., cannot be handled with finite buffer space).
\end{eg}

We can preclude infeasible injection patterns (such as Example~\ref{eg:heterogeneous}) by further restricting the allowable injection rate. Consider the following bundling procedure: when packets are injected into a buffer, they are placed in a reserve buffer. If the load of the reserve buffer exceeds $\frac{1}{2} C$, then its contents are bundled into packets, each of whose total load is at least $\frac{1}{2} C$. Arguing as before, if the original injection pattern is locally $(\rho, \sigma, \beta)$-bounded with $\rho \leq \frac 1 2 C$, then the resulting injection pattern of bundled packets is locally $(1, \sigma / C, 1 + \beta / C)$-bounded. Note that even though the rate of the adversary is $\rho \leq 1/2$, the rate of the bundled injections can be as large as $1$. This occurs, for example, if the adversary always injects $C/2$ packets into a single buffer each round. These packets are then bundled, resulting in one complete bundle appearing each round.

  \section{OED Upper Bound}
\label{buffer-size-ub}

In this section we prove worst-case buffer space upper bounds for ``information gathering networks'' with a locally bursty adversary---i.e., instances in which all packets share a common destination and the union of trajectories of all packets forms a tree. Specifically, we show that the odd-even downhill (ODE) algorithm of~\cite{Dobrev2017-optimal, Patt-Shamir2017-space} requires $O(B \log n + \sigma)$ buffer space for any $(\rho, \sigma, \beta)$-bounded adversary for which $\beta(e) \leq B$ for all buffers $e$. For clarity and notational simplicity we describe the algorithm and argument in the simpler setting where the network consists of a path. All of the results remain true for general information gathering networks, and analogous arguments follow using the terminology and preliminary results described in~\cite{Patt-Shamir2017-space}.

In the case of the single destination path, the network $G = (V, E)$ consists of a path: $V = \set{1, 2, \ldots, n+1}$, and $E = \set{(i, i+1) \sucht i \leq n}$. All packets share the destination $n+1$, though they can be injected into any buffer. To cut down on notational clutter, we associate a buffer $(i, i+1)$ with its index $i$. In this setting, we describe the \dft{Odd-Even Downhill} or \dft{OED} algorithm independently introduced by Dobrev et al.~\cite{Dobrev2017-optimal} and  Patt-Shamir and Rosenbaum~\cite{Patt-Shamir2017-space}. Following Section~\ref{sec:bundling}, we assume that all edge capacities are $1$, and that all packets have unit size. To simplify notation, we use $L^t(i)$ to denote the the \emph{number} of packets in buffer $i$ immediately before the forwarding step of round $t$.

\begin{dfn}
  The \dft{OED rule} stipulates that $i$ forwards a packet in round $t$ if and only if one of the following conditions is satisfied:
  \begin{enumerate}
  \item $L^t(i) > L^t(i+1)$, or
  \item $L^t(i) = L^t(i+1)$ and $L^t(i)$ is \emph{odd}.
  \end{enumerate}
  By convention, we set $L^t(n+1) = 0$ for all $t$.
\end{dfn}

Both original papers~\cite{Dobrev2017-optimal, Patt-Shamir2017-space} show that for all $(\rho, \sigma)$-bounded adversaries, the maximum buffer load under OED forwarding is $O(\log n + \sigma)$. We will show that OED forwarding achieves similar buffer space usage for any local $(\rho, \sigma, \beta)$-bounded adversary.

\begin{thm}
  \label{thm:oed-ub}
  Let $G$ be a single-destination path of size $n+1$, and let $A$ be any local $(\rho, \sigma, \beta)$-bounded adversary where $\beta(i) \leq B$ for all $i$. Then the worst case buffer load is $O(B \log n + \sigma)$. That is,
  \[
  \sup_{t, i} L^t(i) = O(B \log n + \sigma).
  \]
\end{thm}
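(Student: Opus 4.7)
The plan is to adapt the analysis of OED from~\cite{Dobrev2017-optimal, Patt-Shamir2017-space}, which bounds the maximum load by $O(\log n + \sigma)$ for $(\rho,\sigma)$-bounded adversaries, and pinpoint exactly where the global burst parameter $\sigma$ enters. In those proofs, the bound is derived by showing that if some buffer $i^*$ has load $k$ at time $t^*$, one can construct a \emph{witness} consisting of a time interval $T$, a set $S$ of source buffers, and a common edge $e$, such that $\Omega(k)$ packets originating in $S$ with routes traversing $e$ must have been injected during $T$. Applying the $(\rho,\sigma)$-bound to this witness gives $k \leq \rho|T| + \sigma + O(1)$, and after balancing $|T| = O(k)$ with $\rho \leq 1$ this yields $O(\log n + \sigma)$. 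The essential new observation for LBAs is that the structure of the OED load profile forces $|S|$ to be $O(\log n)$, so plugging in the LBA bound~(\ref{eqn:local-burst}) contributes at most $|S|\cdot B = O(B\log n)$ rather than the pessimistic $nB$ one would get from Observation~\ref{obs:global-vs-local-burst}.

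First I would recall the structural invariant of OED executions: the load profile $L^t(\cdot)$ is ``nearly non-increasing'' downstream, with adjacent differences changing by at most one per round, and the odd-even tiebreaker enforcing equalization whenever two neighboring buffers carry the same load. Then, starting from the assumption $L^{t^*}(i^*)=k$, I would unroll time backwards, following the standard argument: at each earlier round the load of value $k$ must have been either carried over, forwarded in from upstream, or injected. Iterating across levels $k, k-1, k-2, \ldots$, I would build a staircase of buffer/time pairs $(i_j, t_j)$ where $L^{t_j}(i_j) \geq k-j$, together with the injection events that feed this staircase.

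The main obstacle, and the new content compared to~\cite{Dobrev2017-optimal, Patt-Shamir2017-space}, is to show that the set $S$ of \emph{origin} buffers of the injections along this staircase has size $O(\log n)$ rather than $O(k)$. The intuition is a doubling/halving argument: raising the sustained load on a plateau from $j$ to $j+1$ requires the plateau of buffers with load $\geq j$ to roughly double in width (otherwise OED equalizes and the new packet would be pushed further downstream). Hence after $O(\log n)$ such doublings the plateau fills the entire suffix of the path, capping $k$ and simultaneously bounding the number of distinct origin buffers drawn into the witness. Formalizing this requires carefully tracking, for each level of the staircase, which source buffers contribute packets, and verifying that origin buffers only change when the plateau doubles.

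Once $|S| = O(\log n)$ and $|T| = O(k)$ are established, I would close the argument by substituting into~(\ref{eqn:local-burst}):
\[
\Omega(k) \;\leq\; N_S^T(e) \;\leq\; \rho|T| + \sigma + \sum_{f \in S}\beta(f) \;\leq\; \rho\cdot O(k) + \sigma + O(B\log n).
\]
For $\rho < 1$ this rearranges directly to $k = O(B\log n + \sigma)$. For the critical case $\rho = 1$, I would import (unchanged) the ``forwarding deficit'' analysis from~\cite{Patt-Shamir2017-space}, which shows that whenever the witness interval $T$ begins with buffer $i^*$ empty, the number of idle forwarding rounds of $e$ during $T$ is $O(\log n)$, so the deficit $|T| - (\text{packets forwarded across } e)$ is absorbed into the additive $O(B\log n)$ term. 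Combining these pieces gives $\sup_{t,i} L^t(i) = O(B\log n + \sigma)$, as claimed.
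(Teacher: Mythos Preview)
Your proposal hinges on the claim that the witness origin set $S$ has size $O(\log n)$, and this is where it breaks down. Under OED, the packets sitting above an even plateau $I$ of height $h$ can originate from \emph{any} buffer in $I$ (this is precisely Corollary~\ref{cor:even-plateau}), and at low heights $|I|$ may be $\Theta(n)$. Nothing in the odd--even rule forces the origins feeding a high buffer to concentrate on $O(\log n)$ sites; the lower-bound construction in Section~\ref{sec:buffer-size-lb} already injects into $\Theta(n)$ distinct buffers while building load $\Theta(B\log n)$, so a staircase back from that peak will not have $O(\log n)$ origins. Your doubling intuition (``the plateau at height $j$ roughly doubles'') is the correct geometric picture for why the \emph{number of height levels} is logarithmic, but it says nothing about the number of distinct source buffers per level---each level may draw from the entire plateau at that height. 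With $|S|$ possibly $\Theta(n)$ (or even just $\Theta(k)$), a single application of~(\ref{eqn:local-burst}) yields only $\Omega(k)\le \rho|T|+\sigma+B|S|$, which does not close.

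The paper's proof avoids this by never bounding $|S|$ globally. Instead it applies the local-burst bound \emph{separately at each even plateau level}, taking $S=I_h$: since all packets above $I_h$ originated in $I_h$ (Corollary~\ref{cor:even-plateau}), the restricted adversary $A_{I_h}$ is an ordinary $(\rho,\,\sigma+\sum_{i\in I_h}\beta(i))$-bounded adversary by Observation~\ref{obs:global-vs-local-burst}, and Lemma~\ref{lem:upper-load} then gives $L_h(I_h)\le (B+1)|I_h|+\sigma$ (Corollary~\ref{cor:upper-load}). The crucial point is that the local-burst contribution $B|I_h|$ scales with the plateau itself; this proportionality forces the nested even plateaus containing the max-load buffer to shrink by a constant factor every $B+2$ height levels, which is what yields the $O(B\log n+\sigma)$ bound. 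So the fix is not to find a smaller $S$ for one global witness, but to apply the bound level by level with $S=I_h$ and let the resulting recursion do the work.
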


Our proof of Theorem~\ref{thm:oed-ub} follows the analysis of OED presented in~\cite{Patt-Shamir2017-space}. Specifically, their analysis considers the evolution of \emph{plateaus} in the network.

\begin{dfn}[cf.~\cite{Patt-Shamir2017-space}]
  \label{dfn:plateau}
  Let $G$ be a single destination path and $L^t : V \to \N$ a \dft{configuration}---i.e., assignment of loads to buffers---at some time $t$. We say that an interval $I = [a, b] \subseteq V$ is a \dft{plateau} of \dft{height} $h$ if $I$ is a maximal sub-interval of $V$ such that for all $i \in I$, $L^t(i) \geq h$. That is, every buffer in $I$ has load at least $h$, and there is no larger interval $I'$ containing $I$ with this property. We say that $I$ is an \dft{even plateau} if $I$ is a plateau of height $h$ for some even number $h$.
\end{dfn}

\begin{figure}
  \includegraphics[scale=0.75]{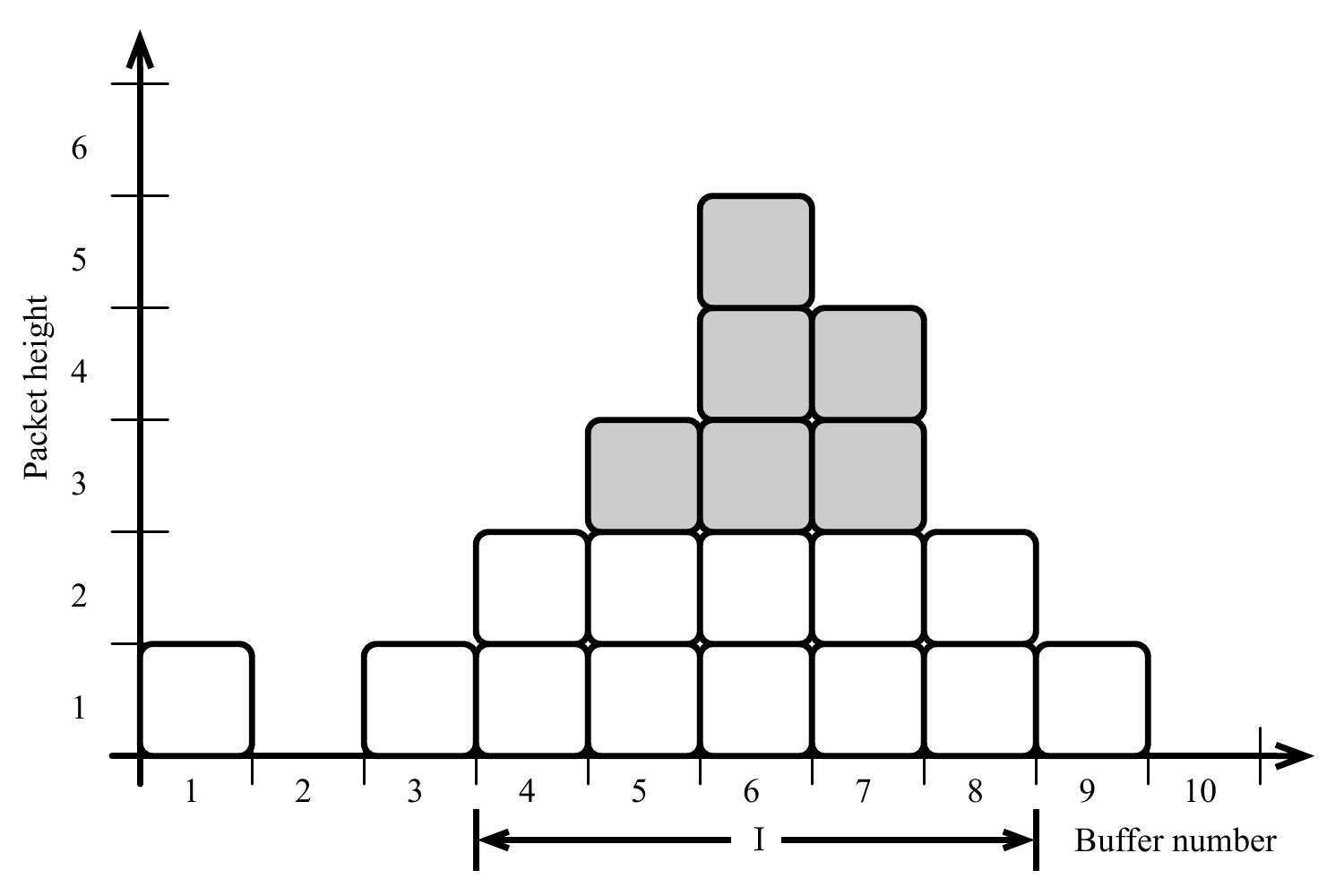}
  \caption{A configuration of packets. Each column represents a buffer, labeled $1$ through $10$, while the vertical axis indicates heights. The load of each buffer corresponds to the height of the highest packet in the buffer; for example, $L(6) = 5$. The indicated interval $I = [4, 8]$ is a plateau of height $2$. The six shaded packets sit above the plateau $I$, so that $L_2(I) = 6$. Corollary~\ref{cor:upper-load} states that the number of packets above $I$ is at most $(B+1) \abs{I} + \sigma$ for any locally $(1, \sigma, B)$-bounded adversary. The proof of Theorem~\ref{thm:oed-ub} applies this corollary inductively to the nested sequence of even plateaus containing the buffer with maximum load to show that the maximum load is at most logarithmic in the size of the network.\label{fig:plateau}}  
\end{figure}

We think of packets in $G$ as being arranged vertically in buffers---see Figure~\ref{fig:plateau}. Since packets share the same destination, for the purposes of our load analysis, we can treat all packets in a buffer as indistinguishable.\footnote{In order to analyze packet latency, one should distinguish packets by their age.} We refer to the \dft{height} of a packet in a buffer as one greater than the number of packets below it. Thus, for a buffer with load $1$, its sole packet is at height $1$; a buffer with two packets has one at height $1$ and the second at height $2$, etc. We say that a packet $P$ is \dft{above} a plateau $I$ of height $h$ if $\height(P) > h$. Given a configuration $L : V \to \N$ and a plateau $I$ of height $h$, we denote the number of packets above $I$ by $L_h(I)$. That is,
\begin{equation}
  \label{eqn:h-load}
  L_h(I) = \sum_{i \in I} (L(i) - h).
\end{equation}

OED forwarding does not specify \emph{which} packet is forwarded when a buffer is forwarded, and the maximum load analysis of the algorithm is independent of this choice. Nonetheless, for the purposes of bookkeeping, it will be convenient to adopt the following conventions:
\begin{enumerate}
\item Whenever a packet is injected or received as the result of forwarding, it occupies the highest position in its buffer;
\item When a buffer forwards a packet, the highest packet in the buffer is forwarded.
\end{enumerate}
That is---for the purposes of bookkeeping---we assume that the buffers operate as LIFO (last-in, first-out) stacks.

With these conventions, we make some preliminary observations about the movement of packets in an execution of the OED algorithm. 

\begin{lem}
  \label{lem:persistence}
  Suppose $L : V \to \N$ is a configuration immediately before forwarding and $L'$ the configuration afterward. Suppose $I = [a, b] \subseteq V$ is an even plateau of height $h$ in configuration $L$. Then in configuration $L'$, for all $i \in [a, b-1]$, we have $L'(i) \geq h$. Thus, in $L'$, the interval $[a, b-1]$ is contained in a plateau $I'$ of height $h$.
\end{lem}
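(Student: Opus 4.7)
The plan is to control $L'(i)$ for each $i \in [a,b-1]$ by writing
\[
L'(i) = L(i) - f_i + f_{i-1},
\]
where $f_j \in \set{0,1}$ indicates whether buffer $j$ forwards in this round under the OED rule. It therefore suffices to show that either (a) $L(i) \geq h+1$, or (b) buffer $i$ does not forward; in case (a) we get $L'(i) \geq L(i) - 1 \geq h$ regardless of what $i$ does, and in case (b) we get $L'(i) \geq L(i) \geq h$ regardless of what $i-1$ does.

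The key step is to handle the remaining possibility, namely $L(i) = h$. Here I would use that $i \leq b - 1$ implies $i+1 \in I$, so by the plateau hypothesis $L(i+1) \geq h = L(i)$. Consulting the OED rule, buffer $i$ forwards only if either $L(i) > L(i+1)$ or $L(i) = L(i+1)$ with $L(i)$ odd. The first condition fails since $L(i) \leq L(i+1)$, and the second fails since $h$ is \emph{even} by the hypothesis that $I$ is an even plateau. Hence $f_i = 0$ and $L'(i) \geq h$, establishing case (b).

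For the second assertion I would simply observe that the set of indices $j$ with $L'(j) \geq h$ is a union of intervals; since $[a,b-1]$ lies entirely in this set, it is contained in one of its maximal intervals, which is by definition a plateau of height $h$ in $L'$.

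The only subtle point is the role of the evenness of $h$ and the exclusion of the rightmost endpoint $b$: at $b$ one has $i+1 = b+1 \notin I$ by maximality of $I$, so $L(b+1)$ can be as small as $h-1$ and buffer $b$ may well forward at its floor value, possibly dropping $L'(b)$ to $h-1$. Thus the argument is genuinely tight at $b$, and the statement correctly excludes it. Beyond this observation, the proof is just a two-case check on $L(i)$ and does not require any global bookkeeping.
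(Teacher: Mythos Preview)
Your proof is correct and follows essentially the same approach as the paper's: both arguments use that $i+1 \in I$ (so $L(i+1) \geq h$) together with the evenness of $h$ to conclude that buffer $i$ forwards only when $L(i) \geq h+1$, giving $L'(i) \geq h$ in either case. Your version is slightly more explicit in separating the cases $L(i) = h$ and $L(i) \geq h+1$ and in justifying the second assertion about $[a,b-1]$ lying in a plateau, but the substance is identical.
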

\begin{proof}
  Suppose $i \in [a, b - 1]$. Then in $L$, we have $L(i), L(i+1) \geq h$. Since $h$ is even, $i$ will not forward unless $L(i) \geq h + 1$. Therefore, $L'(i) \geq h + 1 - 1 = h$.
\end{proof}

\begin{lem}
  \label{lem:packet-movement}
  For any packet $P$, let $\height^t(P)$ denote the height of $P$ at time $t$,\footnote{Note that this quantity is well-defined for all $t$ (until $P$ is delivered to its destination) by the LIFO conventions for height and packet movement.} and let $h$ be an even number. If $\height^t(P) \leq h$, then for all $s \geq t$ we have $\height^s(P) \leq h$.
\end{lem}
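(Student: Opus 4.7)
The plan is to induct on $s - t \geq 0$. The base case $s = t$ is the hypothesis, so suppose $\height^s(P) \leq h$ and consider round $s$. The key observation is that under the LIFO bookkeeping conventions adopted just before the lemma, while $P$ remains in a single buffer its height does not change: packets are forwarded from, injected onto, and received (from upstream) onto the top of the stack, so nothing can ever be wedged below $P$, nor can any packet below $P$ leave while $P$ is still there. Consequently, $\height^{s+1}(P)$ can differ from $\height^s(P)$ only if $P$ itself is forwarded during round $s$, which requires $P$ to be the topmost packet of its current buffer $i$, so that $\height^s(P) = L^s(i) \leq h$.

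Assume this is the case. Because buffer $i$ fires under OED, one of two conditions holds: (a) $L^s(i) > L^s(i+1)$, or (b) $L^s(i) = L^s(i+1)$ and $L^s(i)$ is odd. In case (a), $L^s(i+1) \leq L^s(i) - 1 \leq h - 1$. In case (b), $L^s(i) \leq h$ is odd and $h$ is even, hence $L^s(i) \leq h - 1$, and so $L^s(i+1) = L^s(i) \leq h - 1$ as well. Either way, $L^s(i+1) \leq h - 1$. Forwarding in round $s$ is simultaneous, so after the forwarding step buffer $i+1$ contains $L^s(i+1) - f$ packets from its original contents, where $f \in \{0,1\}$ records whether $i+1$ itself fires, and $P$ is placed on top. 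Thus $P$'s new height is $L^s(i+1) - f + 1 \leq h - f \leq h$. The subsequent injection step of round $s+1$ only places new packets above $P$ in its buffer, so $P$'s height is unchanged. This gives $\height^{s+1}(P) \leq h$ and closes the induction.

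The main subtlety of the argument, rather than a true obstacle, is the interplay between the LIFO convention, the OED tie-breaking rule, and the parity hypothesis. The parity of $h$ is essential: without it, case (b) would only yield $L^s(i+1) \leq h$, which would permit $P$'s new height to reach $h + 1$ whenever $i+1$ does not fire. The OED rule---fire on a tie only when the common value is odd---is precisely calibrated to close this gap at even thresholds, which is why the lemma is stated for even $h$ and why even plateaus play a distinguished role in the subsequent proof of \textbf{Theorem~\ref{thm:oed-ub}}.
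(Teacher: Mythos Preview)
Your proof is correct and follows essentially the same approach as the paper's: both argue that $P$'s height is fixed while it sits in a buffer, that $P$ is forwarded only when it is the top packet so $L(i) \leq h$, and that the OED rule together with the parity of $h$ forces $L(i+1) \leq h-1$, giving the new height at most $h$. Your version is slightly more detailed---you explicitly set up the induction on $s-t$, split the OED firing condition into its two cases, and account for whether buffer $i+1$ itself fires (the paper simply bounds the new height by $L(i+1)+1$)---but the underlying argument is identical.
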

\begin{proof}
  By our conventions of packet movement, the height of $P$ in a fixed buffer is unchanged until it is forwarded. Since the top packet is always forwarded, $P$ can only be forwarded if $\height(P) = L(i)$, where $i$ is the buffer containing $P$ before forwarding. Since $\height(P) \leq h$ we also have $L(i) \leq h$. Since $h$ is even, $i$ only forwards if $L(i+1) \leq h - 1$. Therefore, the height of $P$ \emph{after} forwarding is at most $L(i+1) + 1 \leq h - 1 + 1 = h$. Thus, $P$ remains at height at most $h$.
\end{proof}

\begin{cor}
  \label{cor:even-plateau}
  Suppose $L : V \to \N$ is a configuration and $I = [a, b]$ is an even plateau of height $h$. Suppose a packet $P$ was injected at time $s$, and at time $t \geq s$ sits above $I$ (i.e., $P$ occupies a buffer $j \in I$ and $\height^t(P) > h$). Then $P$ was injected into a buffer in $I$ with an initial height $\height^s(P) > h$.
\end{cor}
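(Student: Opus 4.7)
My plan is to separate the two conclusions. The initial-height statement $\height^s(P) > h$ comes directly from the contrapositive of Lemma~\ref{lem:packet-movement}: since $h$ is even and $\height^t(P) > h$ with $s \leq t$, the lemma forbids $\height^r(P) \leq h$ at any time $r$ between $P$'s injection and time $t$, and in particular at $r = s$.

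For the initial-buffer statement, I will track $P$'s trajectory together with the evolving plateaus surrounding it. Write $i_r$ for the buffer containing $P$ at time $r$ (pre-forwarding), so $i_s = k$ (the buffer I want to locate) and $i_t = j$; the sequence $(i_r)$ is nondecreasing because packets only move rightward. Since $\height^r(P) > h$ for every $r \in [s,t]$ by the first paragraph, $L^r(i_r) \geq h+1$, so $i_r$ sits inside some maximal even plateau $J_r = [m_r, M_r]$ of height $h$ at time $r$. At time $t$, $I = [a, b]$ is itself a maximal plateau of height $h$ containing $j$, so $J_t = I$ and $m_t = a$.

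The core step is to show that $m_r$ is non-increasing in $r$. Fix $r \in [s, t-1]$. Lemma~\ref{lem:persistence} applied to $J_r$ gives that each buffer in $[m_r, M_r - 1]$ has load at least $h$ immediately after round-$r$ forwarding, and round-$(r+1)$ injections only raise loads, so $L^{r+1}(i) \geq h$ throughout $[m_r, M_r - 1]$. Moreover $L^{r+1}(i_{r+1}) \geq \height^{r+1}(P) > h$. The only way $i_{r+1}$ can escape $[m_r, M_r - 1]$ is $i_{r+1} \in \{M_r, M_r + 1\}$; in the delicate subcase $i_{r+1} = M_r + 1$ we must have $i_r = M_r$ with $P$ on top, so $L^r(M_r) = \height^r(P) \geq h + 1$, giving load at least $h$ at $M_r$ after round-$r$ forwarding even once $P$ leaves. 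Hence $[m_r, i_{r+1}]$ carries load $\geq h$ at time $r+1$, and the maximal plateau $J_{r+1}$ containing $i_{r+1}$ extends leftward at least to $m_r$, yielding $m_{r+1} \leq m_r$.

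Iterating, $k = i_s \geq m_s \geq m_t = a$, while $k \leq i_t = j \leq b$ by monotonicity of $(i_r)$, so $k \in I$. The hardest piece is the boundary subcase $i_{r+1} = M_r + 1$, where I must verify via the LIFO bookkeeping convention and the evenness of $h$ that $P$ itself pushing past the right edge of $J_r$ does not drop $M_r$ below height $h$ and thereby split the plateau.
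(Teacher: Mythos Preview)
Your proof is correct and follows essentially the same approach as the paper's: both establish the height claim via the contrapositive of Lemma~\ref{lem:packet-movement}, then track the height-$h$ plateau containing $P$ over time and use Lemma~\ref{lem:persistence} to show its left endpoint is non-increasing, giving $i_s \geq m_s \geq m_t = a$. Your version is in fact more carefully spelled out than the paper's, which simply asserts $[a_s, b_s - 1] \subseteq I_{s+1}$ without explicitly verifying the boundary cases $i_{r+1} \in \{M_r, M_r + 1\}$ that you handle.
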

\begin{proof}
  Suppose that at time $t$, $P$ occupies buffer $j \in [a, b]$. Since packets are only forwarded to a buffer with larger index, $P$ was injected at some buffer $i \leq j$ at some time $t_0 \leq t$. By Lemma~\ref{lem:packet-movement}, for all $s \in [t_0, t]$ we have $\height^{s}(P) > h$, whence the second assertion of the corollary holds.

  For the first assertion, we must show that $i \in I$. To this end, for each time $s \in [t_0, t]$, let $I_s = [a_s, b_s]$ denote the plateau of height $h$ in which $P$ is contained at time $s$. Thus, $I_{t_0}$ is the plateau above which $P$ is initially injected so that $j \in I_{t_0}$, and $I_{t} = I = [a, b]$. By Lemma~\ref{lem:persistence}, for all $s < t$, we have $[a_s, b_s - 1] \subseteq I_{s+1} = [a_{s+1}, b_{s+1}]$. Therefore, we have $a_s \geq a_{s+1}$ for all $s$. Thus, by induction, we have $a_t \leq a_{t_0} \leq j$, where the second inequality holds because $j \in I_{t_0}$. This gives the desired result.
\end{proof}

We now quote a lemma from~\cite{Patt-Shamir2017-space}, which bounds the number of packets above even plateaus for $(\rho, \sigma)$-bounded adversaries.

\begin{lem}[cf.~Lemma~3.4~in~\cite{Patt-Shamir2017-space}]
  \label{lem:upper-load}
  Let $A$ be a $(\rho, \sigma)$-bounded adversary and suppose $L$ is a configuration realized by OED forwarding. Suppose $I$ is an even plateau of height $h$. Then
  \[
  L_h(I) \leq \abs{I} + \sigma.
  \]
\end{lem}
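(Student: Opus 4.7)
Proof plan.

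The plan is to bound $L_h^t(I)$ by a conservation-style count of packets entering and leaving the interval $I$ over a carefully chosen time window, reducing everything to a bound on the excess at the ``birth time'' of the plateau. The two key ingredients are that every packet injected into $I$ traverses edge $b$ (so the $(\rho,\sigma)$-bound on edge $b$ controls injections), and that once the plateau is in place its right endpoint drains at rate one per round.

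First I would define $s$ to be the latest time $s \le t$ at which some buffer $i \in I$ has $L^s(i) < h$ (taking $s = 0$ if no such time exists). By this choice, every buffer of $I$ has load at least $h$ at every $r \in (s, t]$, so $I$ is contained in a plateau of height $h$ throughout the window. Moreover, by Lemma~\ref{lem:persistence} combined with plateau maximality at time $t$, one verifies that $L^r(a - 1) < h \le L^r(a)$ at every $r \in (s, t]$, so by the OED rule $a - 1$ never forwards into $a$ during the window: no packets enter $I$ from the left.

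Next I would combine an injection count with a drainage count. Every packet injected into a buffer of $I$ has destination $n+1$, so its route contains edge $b$; the $(\rho, \sigma)$-bound applied to edge $b$ over $(s, t]$ therefore gives at most $\rho(t - s) + \sigma \le (t - s) + \sigma$ injections into $I$ during this window (using $\rho \le 1$). On the output side, during $(s, t)$ buffer $b$---or the right endpoint of a strictly larger plateau containing $I$---forwards at least one packet per round, contributing an aggregate drainage of at least $t - s - 1$. Combining these gives $L_h^t(I) \le L_h^{s+1}(I) + \sigma$, with the careful bookkeeping of the window endpoints absorbing the $+1$ slack. The remaining initial excess $L_h^{s+1}(I) \le |I|$ is handled by a direct analysis of the single transition from $L^s$ (where some buffer of $I$ sits below $h$) through one forwarding step and one injection step to $L^{s+1}$: each buffer's load changes by at most one under forwarding, and the injection at time $s + 1$ distributes across buffers with at most one unit of excess above $h$ attributable to each buffer of $I$, summing to $|I|$.

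The main obstacle is making the drainage count rigorous when $I$ is contained in a strictly larger plateau at some intermediate $r \in (s, t)$: there $L^r(b+1) \ge h$ and OED might not force $b$ itself to forward if $L^r(b) = L^r(b+1)$ is even. The fix is to work with the full enclosing plateau $[a, b_r]$ at each intermediate time; by Lemma~\ref{lem:persistence} the right boundary $b_r$ can shrink by at most one per round, and since $b_t = b$, the aggregate decrease of $b_r$ over $(s, t)$ accounts for at least $t - s - 1$ packets that are ultimately forwarded out of the enclosing plateau, which we then re-attribute to drainage from $I$ (any packet in $[b+1, b_r]$ at time $r$ either drains past $b_r$ or reduces the excess we need to track by time $t$). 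This restores the required drainage balance and completes the bound $L_h^t(I) \le |I| + \sigma$.
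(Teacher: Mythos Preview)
The paper does not prove this lemma at all; it merely quotes it from~\cite{Patt-Shamir2017-space}. So there is nothing to compare your argument against here, and your attempt must stand on its own. The conservation-over-a-time-window strategy is the right genre of argument, but two steps in your plan do not go through as written.

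\textbf{Initial excess.} Your bound $L_h^{s+1}(I)\le |I|$ is not justified. By your definition of $s$, only \emph{one} buffer $i^*\in I$ is guaranteed to have $L^s(i^*)<h$; the other buffers of $I$ may sit far above $h$ at time $s$ (indeed, the very statement you are proving is what would cap them). Your sentence ``the injection at time $s+1$ distributes across buffers with at most one unit of excess above $h$ attributable to each buffer'' is false for a $(\rho,\sigma)$-bounded adversary, which may drop up to $\rho+\sigma$ packets into a single buffer in one round, and in any case does nothing about the pre-existing excess in buffers other than $i^*$. A correct argument here needs either a different anchor time (e.g.\ the last time the enclosing height-$h$ plateau was \emph{entirely} at height $\le h$, not merely broken at one point) or an induction on $t$ that lets you invoke the lemma on the sub-plateaus of $I$ at time $s$.

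\textbf{Drainage.} Your ``fix'' via the moving right endpoint $b_r$ does not yield $t-s-1$ drained packets. Lemma~\ref{lem:persistence} gives $b_{r+1}\ge b_r-1$, so $b_r$ drops by at most one per round, but nothing prevents $b_r$ from \emph{increasing} (injections can raise new buffers to height $h$). Hence the net change $b_{s+1}-b_t$ can be small or negative while $t-s$ is large, and the re-attribution you sketch (``any packet in $[b+1,b_r]$ \ldots reduces the excess we need to track'') is not a bijection with drained packets. What actually works is to track $L_h$ of the \emph{enclosing} plateau $[a_r,b_r]$: its right endpoint $b_r$ always forwards (since $L^r(b_r)>L^r(b_r+1)$), and one must then argue that the injections counted over the window are exactly those into $[a,b_r]$, which are still controlled by the $(\rho,\sigma)$ bound on edge $b_r$ (or on $b$, after checking that the relevant packets all cross $b$). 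The bookkeeping is doable but is not what you wrote.
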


\begin{cor}
  \label{cor:upper-load}
  Let $A$ be a local $(\rho, \sigma, \beta)$-bounded adversary with $\beta(i) \leq B$ for all $i$, and suppose $L$ is a configuration realized by OED forwarding. Suppose $I$ is an even plateau of height $h$. Then
  \[
  L_h(I) \leq \abs{I} + \sigma + \sum_{i \in I} \beta(i) \leq (B + 1) \abs{I} + \sigma.
  \]
\end{cor}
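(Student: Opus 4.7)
The plan is to leverage Corollary~\ref{cor:even-plateau} to restrict attention to packets originating in $I$, and then to re-run the proof of Lemma~\ref{lem:upper-load} with the tighter local burst bound substituted for the $(\rho,\sigma)$ bound.

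First I would fix an even plateau $I = [a,b]$ of height $h$ in the configuration $L = L^t$. By Corollary~\ref{cor:even-plateau}, any packet that contributes to $L_h(I)$---i.e., any packet sitting above $I$ at time $t$---was originally injected into some buffer in $I$ at an earlier round. Consequently, the count $L_h(I)$ is determined entirely by injections whose origin buffer lies in $S = I$; injections whose origin lies outside $I$ cannot contribute.

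Next I would revisit the proof of Lemma~\ref{lem:upper-load} in~\cite{Patt-Shamir2017-space}. That proof bounds $L_h(I)$ by combining a drainage argument for OED with the adversary's injection bound applied to packets routed through a suitable edge in $I$ over a time window $T$ of appropriate length. At the single step where the $(\rho,\sigma)$ bound $N^T(e) \leq \rho\abs{T} + \sigma$ is invoked, restrict attention to injections originating in $S = I$; by the previous paragraph, no contribution is lost. The local burst hypothesis (Definition~\ref{dfn:local-burst}) then yields
\[
N^T_I(e) \leq \rho\abs{T} + \sigma + \sum_{i \in I} \beta(i)
\]
in place of the weaker $(\rho,\sigma)$ bound. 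Propagating this through the remainder of the argument replaces the estimate $\abs{I} + \sigma$ from Lemma~\ref{lem:upper-load} by $\abs{I} + \sigma + \sum_{i \in I} \beta(i)$, which is the first inequality claimed. The second inequality is immediate from the assumption $\beta(i) \leq B$.

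The main obstacle is checking that throughout the proof of Lemma~\ref{lem:upper-load} the injection bound is applied only to packets that end up above $I$ (and therefore, by Corollary~\ref{cor:even-plateau}, originate in $I$). Once this is confirmed---which should follow from the fact that the proof is already directly counting packets contributing to $L_h(I)$---the upgrade from $(\rho,\sigma)$-boundedness to local $(\rho,\sigma,\beta)$-boundedness is a purely mechanical substitution of $\sigma$ by $\sigma + \sum_{i \in I}\beta(i)$, requiring no new insight into the OED dynamics themselves.
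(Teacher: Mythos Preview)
Your approach is correct and matches the paper's in spirit: both use Corollary~\ref{cor:even-plateau} to restrict attention to injections originating in $I$, and then feed the resulting tightened burst parameter through Lemma~\ref{lem:upper-load}. The only difference is in packaging. Where you propose to re-open the proof of Lemma~\ref{lem:upper-load} and substitute the bound $N_I^T(e) \leq \rho\abs{T} + \sigma + \sum_{i\in I}\beta(i)$ at the point of use, the paper instead defines the restricted adversary $A_I$ (all packets of $A$ with origin in $I$), invokes Observation~\ref{obs:global-vs-local-burst} to conclude that $A_I$ is a \emph{classical} $(\rho,\sigma')$-bounded adversary with $\sigma' = \sigma + \sum_{i\in I}\beta(i)$, and then applies Lemma~\ref{lem:upper-load} to $A_I$ as a black box. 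This sidesteps your ``main obstacle'' of checking where in the external proof the injection bound is invoked: once $A_I$ is known to be $(\rho,\sigma')$-bounded, Lemma~\ref{lem:upper-load} applies verbatim with $\sigma'$ in place of $\sigma$, and Corollary~\ref{cor:even-plateau} guarantees that packets outside $A_I$ do not contribute to $L_h(I)$. Your route works too, but the paper's is a touch cleaner because it avoids reopening a proof from another paper.
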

\begin{proof}
  For any interval $I$, let $A_I \subseteq A$ be the injection pattern consisting of packets injected into buffers $i \in I$. Since $A$ is locally $(\rho, \sigma, \beta)$-bounded, $A_I$ is $(\rho, \sigma, \beta')$-bounded with
  \[
  \beta'(i) =
  \begin{cases}
    \beta(i) &\text{if } i \in I\\
    0 &\text{otherwise}
  \end{cases}
  \]
  Therefore, by Observation~\ref{obs:global-vs-local-burst}, $A_I$ is $(\rho, \sigma')$-bounded, where $\sigma' = \sigma + \sum_{i \in I} \beta(i)$.

  Now let $I$ be an even plateau of height $h$. By Corollary~\ref{cor:even-plateau}, the configuration $L_h$ (see Equation~(\ref{eqn:h-load})) consists entirely of packets injected into $I$---i.e., packets injected by $A_I$. Since $A_I$ is a $(\rho, \sigma')$-bounded adversary, Lemma~\ref{lem:upper-load} implies that $L_h(I) \leq \abs{I} + \sigma'$, which gives the desired result.
\end{proof}

We now have all the pieces together to prove Theorem~\ref{thm:oed-ub}. The idea is to use Corollary~\ref{cor:upper-load} inductively to show that plateaus cannot grow too tall.

\begin{proof}[Proof of Theorem~\ref{thm:oed-ub}]
  Assume without loss of generality that $B$ is even, and consider any configuration $L$ attained by $A$. Let $i^* = \arg\max_i L(i)$ be a buffer with maximum load, and define $I_0 \supseteq I_1 \supseteq I_2 \supseteq \cdots \supseteq I_\ell$ where $I_j$ is the plateau of height $j$ containing $i^*$, and $\ell = L(i^*)$.

  Define $m$ to be the maximum value such that
  \begin{align}
    \label{eqn:m-def}
    \abs{I_{m(B+2)}} - \frac{\sigma}{B+1} \geq 1,
  \end{align}
  if such a value of $m > 0$ exists, and take $m = 0$ otherwise. Observe that for all $k$, we have
  \begin{align}
    \label{eqn:k-load}
    L_k(I_k) \geq \sum_{j = k+1}^\ell \abs{I_j}.
  \end{align}
  Since the $I_j$ are nested intervals, for any $k < m$ we have
  \begin{align}
    \label{eqn:k-load-b}
    L_{k(B+2)}(I_{k(B+2)}) \geq (B + 2) \sum_{j = k+1}^m L_{j (B + 2)}(I_{j (B+2)}).
  \end{align}
  Combining~(\ref{eqn:k-load-b}) with the result of Corollary~\ref{cor:upper-load}, we find that for all $j < m$,
  \begin{align}
    \label{eqn:upper-load-bound}
    (B+2) \sum_{j = k+1}^m \abs{I_{j(B+2)}} \leq (B + 1) \abs{I_{k(B+2)}} + \sigma.
  \end{align}
  Rearranging~(\ref{eqn:upper-load-bound}) yields that for all $k = 0, 1, \ldots m - 1$ we have
  \begin{align}
    \label{eqn:inductive-step}
    b \sum_{j = k+1}^m \abs{I_{j(B+2)}} - \frac{\sigma}{B+1} &\leq \abs{I_{k(B+2)}}\quad\text{where}\ b = \frac{B+2}{B+1}.
  \end{align}
  Note that for $k = m - 1, m - 2, \ldots$,~(\ref{eqn:inductive-step}) gives
  \begin{align}
    b \abs{I_{m(B+2)}} - \frac{\sigma}{B+1} &\leq \abs{I_{(m-1)(B+2)}} \label{eqn:ind-1}\\
    b \abs{I_{m(B+2)}} + b \abs{I_{(m-1)(B+2)}} - \frac{\sigma}{B+1} &\leq \abs{I_{(m-2)(B+2)}}\label{eqn:ind-2}\\
    &\vdots\nonumber
  \end{align}
  Combining~(\ref{eqn:ind-1}) and~(\ref{eqn:ind-2}), we obtain
  \begin{align}
    (b + b^2) \abs{I_{m(B+2)}} - (1 + b) \frac{\sigma}{B+1} \leq \abs{I_{(m-2)(B+2)}}.
  \end{align}
  Continuing in this way, a straightforward induction argument combined with the observation that $b > 1$ gives
  \begin{align}
    \label{eqn:m-exp-bound}
    b^m \paren{\abs{I_{m(B+2)}} - \frac{\sigma}{B+1}} \leq \abs{I_0} = n.
  \end{align}
  By the choice of $m$ in~(\ref{eqn:m-def}),~(\ref{eqn:m-exp-bound}) implies that
  \begin{align}
    m \leq \log_b n.
  \end{align}
  Again, from the definition of $m$, taking $h = m (B+2) + 2$, we have
  \[
  \abs{I_h} < 1 + \frac{\sigma}{B + 1}. 
  \]
  Applying Corollary~\ref{cor:upper-load}, gives
  \[
  L_h(I_h) \leq (B + 1) \abs{I_h} + \sigma \leq B + 2 \sigma + 1.
  \]
  Since there are at most $B + 2 \sigma + 1$ above $I_h$, the load of $i^*$ satisfies
  \begin{align*}
    L(i^*) &\leq h + B + 2 \sigma + 1\\
    &\leq (B + 2) m + B + 2\sigma + 3\\
    &= O(B \log n + \sigma),
  \end{align*}
  which gives the desired result.
\end{proof}

\subsection{Consequences}
\label{sec:ub-consequences}

Here, we list some consequences of the upper bound of Theorem~\ref{thm:oed-ub} when applied in combination with the packet bundling procedures described in Sections~\ref{sec:flows} and~\ref{sec:bundling}. For the following results, we assume that $G$ is an information gathering network with uniform edge capacity $C$.
\begin{cor}
  \label{cor:general-capacity}
  Suppose $A$ is a locally $(\rho, \sigma, \beta)$-bounded adversary with $\rho \leq C$ and $\beta(e) \leq B$ for all $e \in E$. Then OED forwarding applied to the $C$-reduction of $A$ has buffer space usage $O((B + C)\log n + \sigma)$.
\end{cor}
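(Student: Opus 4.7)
The plan is to invoke Corollary~\ref{cor:uniform-bundling} with $F$ taken to be OED forwarding and $f_G$ taken to be the bound supplied by Theorem~\ref{thm:oed-ub}. Theorem~\ref{thm:oed-ub} gives that on any unit-capacity single-destination path (and, by the extension noted at the start of Section~\ref{buffer-size-ub}, on any information gathering network) of size $n$, OED achieves worst-case buffer space $f_G(\sigma, B) = O(B \log n + \sigma)$ against every locally $(1, \sigma, B)$-bounded adversary.

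To apply the corollary, I would first note that since $\rho \leq C$, the adversary $A$ is in particular locally $(C, \sigma, \beta)$-bounded. By Proposition~\ref{prop:uniform-bundling}, the $C$-reduction $\frac{1}{C} A$ is then locally $(1, \sigma/C, 1 + \beta/C)$-bounded; because $\beta(e) \leq B$ uniformly, the effective local burst constant for the bundled instance is $B' = 1 + B/C$ and the effective global burst is $\sigma' = \sigma/C$. The jumbo packets live on an effectively unit-capacity network (at most one jumbo packet traverses each edge per round), so Theorem~\ref{thm:oed-ub} applied to $\frac{1}{C} A$ yields a jumbo-packet buffer bound of $O\bigl((1 + B/C) \log n + \sigma/C\bigr)$.

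Finally, Corollary~\ref{cor:uniform-bundling} converts this back to the original unit-sized packet accounting: the OED buffer space usage on the $C$-reduction of $A$ is at most
\[
C \cdot f_G(\sigma/C,\, 1 + B/C) + C \;=\; O\bigl((B + C) \log n + \sigma\bigr),
\]
which is the claimed bound. The main obstacle is really just careful parameter bookkeeping through the two successive reductions (flow discretization and bundling into jumbo packets); the additive $+ C$ from Corollary~\ref{cor:uniform-bundling}, which accounts for unit packets still accumulating in a reserve before being bundled into a jumbo packet, is the one subtlety that could otherwise be overlooked, and it is harmlessly absorbed into the final $O$-bound.
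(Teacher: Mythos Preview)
Your proposal is correct and is precisely the argument the paper has in mind: the corollary is stated without proof as an immediate consequence of combining Theorem~\ref{thm:oed-ub} with Corollary~\ref{cor:uniform-bundling} (via Proposition~\ref{prop:uniform-bundling}), and your parameter tracking through the $C$-reduction matches the intended computation.
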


\begin{cor}
  \label{cor:bundling-continuous}
  Suppose $\Phi$ is a family of flows with locally dependent rate bound $r \leq C$, local burst parameters $b_\phi \leq B$, and global burst parameter $\sigma$. Then OED forwarding applied to the discretization of the $C$-reduction of $\Phi$ requires buffer space $O((B + C) \log n + \sigma)$.
\end{cor}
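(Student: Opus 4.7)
The plan is to chain together the $C$-reduction of the flow family, Lemma~\ref{lem:discretization}, and Theorem~\ref{thm:oed-ub}, exactly mirroring the derivation of Corollary~\ref{cor:general-capacity} but beginning in the continuous model rather than from a discrete LBA.

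First, I would form $\frac{1}{C}\Phi = \set{(a/C, P) \sucht (a, P) \in \Phi}$. Rescaling each arrival curve by $1/C$ rescales all three parameters by $1/C$, so $\frac{1}{C}\Phi$ has locally dependent rate bound $r/C \leq 1$, local burst parameters $b_\phi/C \leq B/C$, and global burst parameter $\sigma/C$. Applying Lemma~\ref{lem:discretization} to $\frac{1}{C}\Phi$ produces an LBA $A'$ that is locally $(r/C,\,\sigma/C,\,\beta')$-bounded, where $\beta'(e) = \sum_{\phi\,:\, \ini_\phi = e}(1 + b_\phi/C)$. After merging flows sharing an initial buffer so that we may assume at most one flow per origin, $\beta'(e) \leq 1 + B/C$ uniformly.

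Since $A'$ is a unit-capacity LBA, Theorem~\ref{thm:oed-ub} applies directly and yields a worst-case buffer load of $O\paren{(1 + B/C)\log n + \sigma/C}$ measured in jumbo packets. Each jumbo packet represents $C$ units in the original (pre-reduction) flow, and each buffer additionally needs $O(C)$ space to hold the fractional remainder still awaiting bundling, so multiplying by $C$ and adding $O(C)$ gives a buffer space usage of
\[
C \cdot O\paren{(1 + B/C)\log n + \sigma/C} + O(C) = O\paren{(B+C)\log n + \sigma}
\]
in the original units, which is the stated bound.

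The main point requiring care is the bound $\beta'(e) \leq 1 + B/C$: a priori the sum defining $\beta'(e)$ grows with the number of flows originating at~$e$, so the per-flow hypothesis $b_\phi \leq B$ alone is not enough. The cleanest resolution is to combine all flows with common initial buffer into a single aggregated flow at $e$ while preserving the locally dependent rate bound and keeping the combined local burst $\leq B$; alternatively one interprets $B$ as a bound on the aggregated per-origin burst. With this aggregation step in hand, the rest of the argument is a mechanical composition of the tools already established in Sections~\ref{sec:flows} and~\ref{sec:bundling}.
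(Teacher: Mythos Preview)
Your approach is correct and matches the paper's intended argument: the corollary is stated without proof, as an immediate consequence of composing the $C$-reduction, Lemma~\ref{lem:discretization}, and Theorem~\ref{thm:oed-ub} (exactly parallel to Corollary~\ref{cor:general-capacity}). The per-origin aggregation subtlety you flag is genuine and not addressed explicitly in the paper; since in an information gathering network the origin determines the path, merging all flows with a common origin into one is lossless, and your resolution (or reading $B$ as a per-origin aggregate burst) is the natural fix.
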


\begin{cor}
  \label{cor:bundling-heterogeneous}
  Suppose $A$ is a locally $(\rho, \sigma, \beta)$-bounded adversary with indivisible heterogeneous packet injections, and $\rho \leq C / 2$. Then OED forwarding applied to the bundled injection pattern described in Section~\ref{sec:heterogeneous} achieves buffer space usage $O((B + C) \log n + \sigma)$.
\end{cor}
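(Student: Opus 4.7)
The plan is to mirror the proof of Corollary~\ref{cor:general-capacity}, substituting the heterogeneous-packet bundling procedure of Section~\ref{sec:heterogeneous} for the uniform $C$-reduction. First, apply that bundling procedure at each buffer: incoming packets accumulate in a reserve, and whenever the reserve weight exceeds $C/2$, its contents are partitioned into bundles of total weight in $[C/2, C]$---always feasible when individual packet weights are at most $C$, by starting a new bundle whenever the next packet would push its predecessor past weight $C$. Since $\rho \leq C/2$ caps the injected weight rate and every bundle has weight at least $C/2$, the bundle-count rate on any edge is at most $1$.

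Next I would verify that the bundled injection pattern is locally $(1, \sigma/C, 1 + \beta/C)$-bounded, as asserted in Section~\ref{sec:heterogeneous}. This is a direct adaptation of the proof of Lemma~\ref{lem:discretization}: for each originating buffer, associate a flow whose arrival curve is that buffer's cumulative injected weight scaled by $1/C$. These weight-flows inherit a locally dependent rate bound of $1$, global burst $\sigma/C$, and per-flow local burst $\beta(e)/C$ from the LBA bounds on $A$; moreover, the number of bundles originating at a buffer during an interval is bounded above by the floor-increment of that buffer's scaled weight curve. Summing the inequality $\lfloor x \rfloor - \lfloor y \rfloor \leq 1 + (x - y)$ across any subset $S$ of originating buffers and restricting to bundles whose path contains $e$ yields the claimed local burst bound, exactly as in Equation~(\ref{eqn:b-phi}).

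Since every bundle has weight at most $C$, one bundle crosses a capacity-$C$ edge per round, so the bundled process is equivalent to a unit-size, unit-capacity forwarding instance. Applying Theorem~\ref{thm:oed-ub} with parameters $\sigma' = \sigma/C$ and $B' = 1 + B/C$ bounds the per-buffer number of stored bundles by $O((1 + B/C) \log n + \sigma/C)$. Multiplying by the maximum bundle weight $C$ converts this to $O((B + C) \log n + \sigma)$ units of stored weight in the forwarding buffers, and adding the reserve-buffer occupancy of at most $C/2$ per buffer contributes only an additive $O(C)$, which is absorbed.

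The main obstacle is the bundle-formation step: one needs a rule guaranteeing that each bundle has weight simultaneously $\geq C/2$ (so the bundle-count rate stays at most $1$) and $\leq C$ (so bundles are transportable across capacity-$C$ edges). The rule ``start a new bundle whenever the next packet would push it past $C$'' accomplishes both, but it requires separately tracking the boundary case in which a single arriving packet has weight more than $C/2$; such a packet is emitted immediately as its own bundle of weight at most $C$. Once this bookkeeping is pinned down, the rest of the proof is a direct composition of the local-bundling closure claim with Theorem~\ref{thm:oed-ub}.
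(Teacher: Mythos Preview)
The paper does not give a proof of this corollary; it is simply listed as a consequence of combining the heterogeneous bundling procedure of Section~\ref{sec:heterogeneous} with Theorem~\ref{thm:oed-ub}. Your approach is exactly the intended one, and the overall structure---bundle, verify the bundled pattern is an LBA with the right parameters, then invoke Theorem~\ref{thm:oed-ub} and multiply back by $C$---is correct.

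There is, however, a quantitative slip in your bundle-counting step. You claim that the number of bundles emitted at a buffer during an interval is bounded above by the floor-increment of $W/C$, where $W$ is the cumulative injected weight. But the bundling rule of Section~\ref{sec:heterogeneous} only guarantees that each bundle has weight \emph{at least $C/2$}, not at least $C$. Concretely, if packets of weight $0.6C$ arrive one per round, then after two rounds $W = 1.2C$ and $\lfloor W/C\rfloor = 1$, yet two bundles have been emitted. The correct scaling is by $2/C$: since every bundle carries weight strictly more than $C/2$, the number of bundles in $[s,t]$ is at most $1 + \tfrac{2}{C}(W(t)-W(s))$. Running your Lemma~\ref{lem:discretization}-style argument with this scaling shows the bundled pattern is locally $(2\rho/C,\, 2\sigma/C,\, 1 + 2\beta/C)$-bounded; with $\rho \le C/2$ the rate is at most $1$. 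Applying Theorem~\ref{thm:oed-ub} and multiplying by $C$ still yields $O((B+C)\log n + \sigma)$, so the corollary stands---only the intermediate constants change. (The paper's own one-line claim in Section~\ref{sec:heterogeneous} has the same factor-of-two looseness, which is harmless at the level of the big-$O$ statement.)
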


  \section{Lower Bounds on Buffer Size}
\label{sec:buffer-size-lb}

In this section, we show that buffer space usage of the OED algorithm is asymptotically optimal among deterministic forwarding protocols. In Appendix~\ref{sec:rand-lb}, we generalize the lower bound to randomized forwarding protocols.

\begin{thm}
  \label{thm:det-lb}
  Let $F$ be any deterministic online forwarding protocol, and let $G = (V, E)$ be a single-destination path of length $n$. For any $B$ let $\beta(e) = B$ for all $e \in E$. Then there exists a local $(1, \sigma, \beta)$-bounded adversary $A$ such that
  \begin{equation}
    \label{eqn:lb}
    \sup_{t, i} L^t(i) = \Omega(B \log_B n + \sigma).    
  \end{equation}
\end{thm}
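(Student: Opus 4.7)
I would handle the two terms of the bound separately. The $\Omega(\sigma)$ contribution is immediate: at round~$0$, inject $\sigma$ packets into buffer~$1$; this single-round burst is within the LBA budget of Definition~\ref{dfn:local-burst} (the global burst parameter $\sigma$ permits it), giving $L^0(1) \geq \sigma$. It therefore suffices to establish the $\Omega(B \log_B n)$ term in the regime where this dominates.

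For the $\Omega(B \log_B n)$ term, I would construct an adaptive adversary recursively on the path length. Let $L(m)$ denote the worst-case maximum load forced on a single-destination path of size~$m$ against any deterministic protocol. The goal is the recurrence $L(m) \geq L(m/B) + \Omega(B)$ with base case $L(1) = B$ (by injecting $B$ packets into the single buffer); unrolling yields $L(n) = \Omega(B \log_B n)$. The inductive step proceeds in two stages. \emph{Plateau stage}: at time~$0$ inject $B$ packets into every buffer of the path, using the aggregate burst budget $\sum_{f \in E}\beta(f) = Bm$; every buffer now has load at least $B$. \emph{Recursion stage}: identify (by simulating $F$'s deterministic response to the plateau) a sub-interval $I' \subseteq V$ of size $m/B$ where the residual plateau load remains concentrated, and recursively apply the size-$m/B$ adversary strategy on $I'$. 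Since $F$ is deterministic the full injection pattern can be computed offline, so the final adversary is a well-defined member of $\calL(1,\sigma,\beta)$. The recursive injections accumulate on top of the residual, so some buffer in $I'$ attains load at least $L(m/B) + \Omega(B)$.

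The main technical obstacle is reconciling the rate constraint $N_S^T(e) \leq T + \sigma + \sum_{f \in S}\beta(f)$ at every scale with the persistence of the plateau's residual load through the recursion stage. These pull in opposite directions: small sub-intervals $S$ have small burst budgets, so any sequence of phases injecting into $I'$ needs non-trivial delays to respect rate~$1$, but any such delay allows the target buffer's load to drain at rate~$\geq 1$. My plan to resolve this is to (i) choose $I'$ as the rightmost sub-interval of size $m/B$, so that by a pigeonhole argument on the outflow bound (at most one packet per round leaves $I'$ via its right edge), $\Omega(B)$ residual per buffer remains in $I'$ for sufficiently long; and (ii) schedule the recursive phases with carefully chosen gaps so that the cumulative injections into each sub-interval $S$ stay within its burst budget $B|S|$ augmented by the accumulated rate allowance. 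The crux of the argument is verifying that these two constraints are simultaneously satisfiable at every level of the recursion by a geometric-series accounting of packet flows through the nested sub-intervals; this is the most delicate part, and I expect it to require identifying $I'$ adaptively rather than in a fixed nested chain, so that the per-scale injection counts telescope properly.
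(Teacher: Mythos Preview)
Your high-level architecture---inject $B$ packets per buffer into a nested chain of shrinking intervals, choose the next interval adaptively, and recurse $\Theta(\log_B n)$ times---is exactly the paper's strategy. But the proposal is not yet a proof: you explicitly leave unresolved the one step that matters, namely how to choose the shrinkage factor and the phase lengths so that the LBA rate constraint and the load-persistence invariant hold \emph{simultaneously}. Your suggested fix~(i), picking the \emph{rightmost} sub-interval, does not work: if $|I'| = |I|/(2B)$ then $I'$ starts with only $B|I'| = |I|/2$ packets, and in the $\tau = |I|/2$ rounds of the phase the protocol can drain all of them through the terminal edge, leaving $I'$ empty. You already sense this, since you retreat to ``adaptive'' choice in the last sentence, but at that point the plan has no concrete mechanism left.

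The actual resolution is much cleaner than the ``geometric-series accounting'' you anticipate. Take the shrinkage factor to be $2B$ (not $B$) and set the phase length $\tau_k = \tfrac{1}{2}|I_k|$, so that $\tau_k = B|I_{k+1}|$. Then: (a) at the end of phase $k$ the total load of $I_k$ is at least $k(B-\tfrac12)|I_k|$, since only $\tau_k = |I_k|/2$ packets can leave $I_k$ per phase; choosing $I_{k+1}$ by pigeonhole as the heaviest of the $2B$ sub-intervals preserves this average into the next scale. (b) For the LBA bound, over any window spanning phases $j,\ldots,\ell$ the injections into a set $S$ total $\sum_k B|S\cap I_k|$; the $k=j$ term is at most $B|S|$ (the local burst budget), and each later term $B|S\cap I_k| \le B|I_k| = \tau_{k-1}$ is absorbed by the rate-$1$ allowance of the preceding phase. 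So the two constraints do not conflict at all once the factor of $2$ is inserted---no delicate balancing is needed.
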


\begin{rem}
  The lower bound of~(\ref{eqn:lb}) is in contrast to the centralized and semi-local upper bounds of~\cite{Miller2016, Patt-Shamir2019-space-optimal, Miller2019-great}, which show that for $(\rho, \sigma)$-bounded adversaries, maximum buffer space $O(\rho + \sigma)$ (with no $n$ dependence) is achievable. In particular,~\cite{Patt-Shamir2019-space-optimal} gives a smooth tradeoff between the locality of a forwarding protocol and the optimal buffer space usage. Their work shows that if each node acts based on the state of its $d$-distance neighborhood, then $\Theta(\frac{1}{d}\log n + \sigma)$ buffer space is necessary and sufficient.\footnote{In the case $d = 1$, the algorithm of~\cite{Patt-Shamir2019-space-optimal} reduces to the OED algorithm.} Thus, for $(\rho, \sigma)$-bounded adversaries, the worst-case buffer space usage for a protocol generally depends on the protocol's locality ($d$). Together, Theorems~\ref{thm:oed-ub} and~\ref{thm:det-lb} show that this is not the case for local $(\rho, \sigma, \beta)$-bounded adversaries, as OED---a local ($d = O(1)$) protocol---is asymptotically optimal, even compared to centralized protocols. Thus, unlike for $(\rho, \sigma)$-bounded injection patterns, non-local information does not asymptotically improve the performance against local $(\rho, \sigma, \beta)$-bounded adversaries in information gathering networks.
\end{rem}

\begin{rem}
  In the proof of Theorem~\ref{thm:det-lb}, we describe an injection pattern as defined by an \emph{adaptive offline adversary}. That is, the choices made by the adversary are made in response to an algorithm's forwarding decisions (i.e., the current state of all buffers in the network). If the forwarding protocol is deterministic, this assumption about the adversary is without loss of generality, as the adversary can simulate the forwarding protocol and construct an injection pattern in advance. In Appendix~\ref{sec:rand-lb}, we will describe a (randomized) \emph{oblivious} adversary that is unaware of the forwarding protocol being used. Nonetheless, the oblivious adversary will almost surely require buffer space usage of $\Omega(B \log n + \sigma)$ against any (centralized, randomized) online forward protocol.
\end{rem}

\begin{proof}[Proof of Theorem~\ref{thm:det-lb}]
  Without loss of generality, we assume that the network size is a power of~$2B$, say, $n = (2B)^m$. Given a deterministic, online forwarding protocol $F$, we construct an adversary $A_F$ that injects packets in phases. Specifically, $A_F$ chooses a nested sequence of sub-intervals $I_0 \supseteq I_1 \supseteq I_2 \supseteq \cdots$ and in the $k\th$ phase, $A_F$ injects packets only into $I_k$. Each $I_k$ is chosen at the end of the $(k-1)\st$ phase depending on the loads of buffers in $I_{k-1}$. The $k\th$ phase lasts $\tau_k$ rounds.

  We set $\tau_1 = \frac 1 2 n$ and $I_1 = [n]$. Inductively, we define $\tau_k = \frac{1}{2B} \tau_{k-1}$ and $\abs{I_k} = \frac{1}{2B} \abs{I_{k-1}}$. At the beginning of phase $k$, $A_F$ selects $I_k$ and injects $B$ packets into each buffer in $I_k$. For $k > 1$, $A_F$ selects $I_k$ as follows. Let
  \[
  I_{k-1} = I_{k-1}^1 \cup I_{k-1}^2 \cup \cdots \cup I_{k-1}^{2B},
  \]
  where the $I_{k-1}^j$ are consecutive intervals of size $\frac{1}{2B} \abs{I_{k-1}}$. Then $A_F$ selects $I_k = I_{k-1}^j$ where $I_{k-1}^j$ has the largest total load at the end of the $k\th$ phase. That is, $j = \arg\max_j L(I_{k-1}^j)$. Observe that for all $k$ we have
  \begin{align}
    \tau_k &= \frac 1 2 \abs{I_k}, \text{ and}\\
    \tau_k &= B \abs{I_{k+1}}. \label{eqn:tau-i}
  \end{align}

  \begin{description}
  \item[Claim.] For all $k$, at the end of the $k\th$ phase, the total load of $I_k$ satisfies
    \begin{equation}
      \label{eqn:inductive-load}
      L(I_k) \geq k \paren{B - \frac 12} \abs{I_k}.
    \end{equation}
  \item[Proof of Claim.] We argue by induction on $k$. For $k = 1$, at the beginning of the first phase, $B$ packets are injected into each buffer in the network. Thus, the total load is $B n = B \abs{I_1}$. After $\abs{\tau_1} = \frac 1 2 n$ forwarding rounds, at most $\frac 1 2 n = \frac 1 2 \abs{I_1}$ packets are forwarded by the last buffer in $I_1$, hence the total load in $I_1$ is at least $(B - 1/2) \abs{I_1}$.

    For the inductive step, assume that $L(I_{k-1}) \geq (k - 1)(B - 1/2) \abs{I_{k-1}}$ at the end of the $(k-1)\st$ phase. By the choice of $I_k$, we therefore have $L(I_k) \geq (k - 1)(B - 1/2) \abs{I_k}$ at the end of the $(k-1)\st$ phase. At the beginning of the $k\th$ phase, $A_F$ injects $B \abs{I_k}$ packets into the buffers in $I_k$, hence the load becomes at least $(k B - (k-1) / 2) \abs{I_k}$. During the $\tau_k = \abs{I_k} / 2$ rounds of the $k\th$ phase, the last buffer in $I_k$ forwards at most $\tau_k = \abs{I_k} / 2$ packets, hence the total load of $I_k$ decreases by at most this amount. Thus we have $k (B - 1/2) \abs{I_k}$ as desired.
  \end{description}

  Applying the claim, after $k = \log_{2B} n$ phases, we have $\abs{I_k} = 1$ and $L(I_k) = k (B - 1/2) = \Omega(B \log_B n))$. The desired result follows by adding one more round in which $A_F$ injects $\sigma$ packets into $I_k$.

  All that remains is to show that $A_F$ is local $(1, \sigma, \beta)$-bounded. To this end, suppose each $k\th$ phase begins in round $s_k$ and ends in round $t_k$. Then injections only occur in rounds $s_k$. Now fix any subset $S$ of nodes and interval $T = [s, t]$, and define $j$ and $\ell$ such that $s_{j-1} < s \leq t_{j}$, and $\ell$ with $s_\ell \leq t \leq t_\ell$. For $k$ satisfying $j \leq k \leq \ell$, define $S_k = S \cap I_k$. Then observe that in round $s_k$, $B \abs{S_k}$ packets are injected into $S$. Therefore, the total number of packets injected into $S$ during $T$ is $N = \sum_{k = j}^\ell B \abs{S_k} \leq \sum_{k = j}^\ell B \abs{I_k}$, while the total number of rounds is $\abs{T} \geq \max\set{1, \sum_{k = j+1}^{\ell - 1} \tau_k}$. We bound $N$ as follows:
  \begin{align*}
    N &= B \abs{S_{j}} + B \abs{S_{j+2}} + \cdots + B \abs{S_\ell}\\
    &\leq B \abs{S_{j}} + B \abs{I_{j+2}} + \cdots + B \abs{I_\ell}\\
    &= B \abs{S_{j}} + \tau_{j+1} + \cdots + \tau_{\ell-1}\\
    &\leq B \abs{S} + \abs{T}.
  \end{align*}
  The second equality comes from Equation~(\ref{eqn:tau-i}). Thus, $A_{F}$ is local $(1, \sigma, \beta)$ bounded.
\end{proof}

In Appendix~\ref{sec:rand-lb}, we generalize the lower bound to randomized protocols.

  \bibliography{local-burst}

  \appendix
  
  \section{Generalization to Randomized Protocols}
\label{sec:rand-lb}

Here, we generalize the lower bound of Theorem~\ref{thm:det-lb} to \emph{randomized} forwarding protocols. Specifically, we construct a randomized \emph{oblivious} injection pattern that requires buffer space $\Omega(B \log n + \sigma)$ against any online forwarding protocol.

\begin{thm}
  \label{thm:rand-lb}
  Let $G = (V, E)$ be the single destination path of length $n$. Then for every $\sigma$, and $B$, there exists a randomized injection pattern $\Arand \in \calL(1, \sigma, B)$ such that for any (centralized, randomized) online forwarding protocol $F$, we have
  \[
  \sup_{t, i} L^t(i) = \Omega(B \log_B n + \sigma)
  \]
  almost surely.
\end{thm}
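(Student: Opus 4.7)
The plan is to randomize the adaptive subinterval choice from the proof of Theorem~\ref{thm:det-lb} and then boost the per-attempt success probability by independent repetition. Define $\Arand$ as an infinite concatenation of independent \emph{attempts}. Each attempt reuses the phase schedule of Theorem~\ref{thm:det-lb}: phase $k$ lasts $\tau_k = \abs{I_k}/2$ rounds and injects $B$ packets into every buffer of $I_k$. The sole modification is that at the start of each phase $k > 1$, $I_k$ is drawn \emph{uniformly at random} from the $2B$ equal subintervals of $I_{k-1}$, using fresh random bits independent of the algorithm and of all previous attempts. Since these bits do not depend on $F$, $\Arand$ is oblivious. Moreover, the counting argument at the end of the proof of Theorem~\ref{thm:det-lb} only uses the phase schedule (not the identity of the chosen subintervals), so the bound $N_S^T(e) \leq \abs{T} + B\abs{S}$ carries over verbatim, and concatenating attempts preserves $\Arand \in \calL(1, O(1), B)$. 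Since the claim is almost sure, we may condition on the coins of $F$ and henceforth treat $F$ as deterministic.

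The heart of the argument is a supermartingale analysis within one attempt. Let $Y_k = L(I_k)/\abs{I_k}$ at the end of phase $k$. Conditioned on the history through the end of phase $k-1$, the uniform choice of $I_k$ among the $2B$ subintervals of $I_{k-1}$ gives expected pre-injection load $L(I_{k-1})/(2B)$ in $I_k$. Phase-$k$ injection then adds exactly $B \abs{I_k}$, and during phase $k$ only the rightmost buffer of $I_k$ can export packets out of $I_k$, at unit rate, removing at most $\tau_k = \abs{I_k}/2$ packets total. Dividing by $\abs{I_k}$ gives
\[
  \mathbb{E}[Y_k \mid \text{history}] \geq Y_{k-1} + B - \tfrac12.
\]
Iterating over $m = \log_{2B} n$ phases yields $\mathbb{E}[Y_m] \geq m(B - 1/2)$. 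Since $\abs{I_m} = 1$, $Y_m$ is literally the load of the unique buffer in $I_m$, so if $M_{\mathrm{att}}$ denotes the largest buffer load attained anywhere during the attempt then $\mathbb{E}[M_{\mathrm{att}}] = \Omega(Bm) = \Omega(B \log_B n)$.

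To upgrade this expectation bound to almost-sure success, combine reverse Markov with a conditional Borel--Cantelli argument. Fix a constant $c > 0$ and let $E_\ell$ be the event that attempt $\ell$ reaches $M_{\mathrm{att}} \geq c B m$. Conditioning on $\bar E_1 \cap \cdots \cap \bar E_{\ell-1}$, at the start of attempt $\ell$ every buffer load is below $cBm$, so the total load in the network throughout the attempt is bounded by $O(nBm)$. Reverse Markov applied to $M_{\mathrm{att}}$ then gives $\Pr[E_\ell \mid \bar E_1 \cap \cdots \cap \bar E_{\ell-1}] \geq p$ for some $p = \Omega(1/n) > 0$. Hence $\Pr[\bar E_1 \cap \cdots \cap \bar E_N] \leq (1-p)^N \to 0$, so almost surely some attempt succeeds and $\sup_{t,i} L^t(i) = \Omega(B \log_B n)$. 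The additive $\sigma$ term is obtained by also injecting $\sigma$ packets into a single buffer at one moment, which is consistent with a local burst parameter $\sigma$ and immediately forces load $\sigma$ there.

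The main subtlety will be the supermartingale step: one must verify that even though the algorithm sees the realized $I_k$ at the start of phase $k$ and can adaptively forward throughout the phase, it still cannot shrink $L(I_k)$ by more than $\tau_k$. The key observation is that during phase $k-1$ the algorithm has no information about the future $I_k$ (it is drawn from fresh, independent bits), so it cannot preemptively drain $I_k$ before phase $k$ begins; and once phase $k$ begins, the only outlet from $I_k$ is its unit-capacity rightmost edge, which can export at most $\tau_k$ packets in $\tau_k$ rounds.
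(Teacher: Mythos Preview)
Your overall approach is sound but differs from the paper's in the per-attempt analysis. The paper also randomizes the subinterval choice (equivalently, it draws the terminal buffer $i \in [n]$ uniformly up front), but its analysis is combinatorial rather than martingale-based: it calls phase $j$ \emph{good} if the chosen $I_j$ has at least the average load among the $2B$ subintervals of $I_{j-1}$, observes by pigeonhole that some subinterval is always good, and concludes that each phase is good with conditional probability at least $1/(2B)$. An epoch in which every phase is good directly inherits the inductive load bound from the proof of Theorem~\ref{thm:det-lb}, giving success probability $(2B)^{-m} = 1/n$ per epoch. Your route---show $\mathbb{E}[Y_k \mid \text{history}] \geq Y_{k-1} + B - \tfrac12$, deduce $\mathbb{E}[Y_m] = \Omega(Bm)$, then apply reverse Markov against an $O(nBm)$ deterministic upper bound on $Y_m$---is a legitimate alternative that also lands at an $\Omega(1/n)$ per-attempt success probability. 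The paper's pigeonhole argument is tidier (no upper bound on $Y_m$ is needed, and no conditioning on prior failures), while your martingale formulation would adapt more readily to variants where the subinterval is not chosen uniformly.

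There is, however, a real gap in your claim that ``concatenating attempts preserves $\Arand \in \calL(1, O(1), B)$.'' Take $S = [n]$, $e = (n, n+1)$, and let $T$ run from the first round of one attempt through the first round of the next. A single attempt injects $\sum_k B\abs{I_k}$ packets across $e$ in $\sum_k \tau_k = \tfrac12\sum_k \abs{I_k}$ rounds, so the excess $N_S^T(e) - \abs{T} = (B - \tfrac12)\sum_k \abs{I_k}$ is already essentially $Bn = B\abs{S}$: the local-burst budget is saturated by the end of the attempt. The first round of the next attempt then injects another $Bn$ packets across $e$ while adding only one round to $\abs{T}$, overshooting the $\calL(1,\sigma,B)$ constraint by roughly $B(n-1)$. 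Hence back-to-back attempts are only in $\calL(1, \sigma, B)$ for $\sigma = \Omega(Bn)$, not $\sigma = O(1)$. The paper addresses exactly this by inserting $Bn + \sigma$ idle rounds between epochs so the rate budget replenishes; you need the same device. Adding it costs nothing elsewhere in your argument, since leftover load at the start of an attempt can only increase $Y_0$, and your reverse-Markov upper bound on $Y_m$ still holds conditionally on prior failures.
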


Again, we emphasize the order of quantifiers in the statement of the Theorem~\ref{thm:rand-lb}: the \emph{same} randomized injection pattern achieves the lower bound (almost surely) for every forwarding protocol.

The adversary $\Arand$ of Theorem~\ref{thm:rand-lb} is a straightforward modification of the adversary $A_F$ constructed the proof of Theorem~\ref{thm:det-lb}. Recall that $A_F$ injects packets into a nested sequence of intervals $I_0 \supseteq I_1 \supseteq \cdots \supseteq I_k$ where $k = O(\log_{B} n)$, and $\abs{I_k} = 1$. Each for $j \geq 1$, $I_j$ is chosen to be one of $2B$ sub-intervals of $I_{j-1}$ with maximum average load and $\abs{I_{j-1}} = 2 B \abs{I_j}$. The idea of $\Arand$ is to perform the same injection pattern as $A$, except that $\Arand$ chooses each $I_j \subseteq I_{j-1}$ randomly, independent of the choices of the forwarding protocol. We will show that for any execution of any forwarding protocol, injecting in this way yields a load of $\Omega(B \log n + \sigma)$ with probability $\Omega(1 / n)$. Thus, by independently repeating the randomized injection pattern ad infinitum, the lower bound is achieved almost surely (and with high probability after $O(n^2)$ injection rounds).

In order to formalize our description of $\Arand$, we first observe that the sequence $I_0 \supseteq I_1 \supseteq \cdots \supseteq I_k$ of intervals chosen by $A_F$ is uniquely determined by $I_k = [a_k]$, the final buffer into which $A_F$ injects packets. We also note that the injection pattern $A_F$ consists of $O(n)$ injection rounds, in which $O(B n + \sigma)$ packets in total are injected. Let $A_i$ denote such an injection pattern in which $I_k = [i]$---i.e., $i$ is the final buffer into which $A_i$ injects packets.

\begin{dfn}
  \label{dfn:a-rand}
  Let $A_i$ be the injection pattern described in the preceding paragraph. Then the adversary $\Arand$ injects packets as follows. Repeat:
  \begin{enumerate}
  \item choose $i \in [n]$ uniformly at random.
  \item in $O(n)$ rounds, inject packets as in $A_i$
  \item wait $B n + \sigma$ rounds without injecting any packets
  \end{enumerate}
  A single iteration of step~1--3 is an \dft{epoch}, and each epoch consists of $k = O(\log n)$ \dft{phases} (corresponding to the sub-intervals $I_j$ chosen in $A_i$).
\end{dfn}

\begin{dfn}
  Consider an execution of some forwarding protocol $F$ with adversary $\Arand$. We say that phase $j > 0$ of some epoch of $\Arand$ is \dft{good} if at the beginning of the phase we have
  \[
  \frac{L(I_j)}{\abs{I_j}} \geq \frac{L(I_{j-1})}{\abs{I_{j-1}}}.
  \]
  We say that an epoch is \dft{good} if all of its phases are good.
\end{dfn}

The following corollary follows immediately from the proof of Theorem~\ref{thm:det-lb}.

\begin{cor}
  \label{cor:lower-bound}
  Suppose an execution of a protocol $F$ with adversary $\Arand$ experiences a good epoch with injection pattern $A_i$. Then in the epoch's final injection round, $L(i) = \Omega(B \log n + \sigma)$.
\end{cor}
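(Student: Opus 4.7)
The plan is to replay the inductive load estimate from the proof of Theorem~\ref{thm:det-lb} essentially verbatim. In that argument, the adversary $A_F$ used its knowledge of $F$ only to guarantee, at the start of each phase $j$, the single inequality
\[
\frac{L(I_j)}{\abs{I_j}} \geq \frac{L(I_{j-1})}{\abs{I_{j-1}}},
\]
which it obtained by picking $I_j$ to be the sub-interval of $I_{j-1}$ with maximum average load. For the oblivious $\Arand$ this inequality need not hold in general, but on a good epoch the definition of a good phase supplies exactly this inequality at every phase, so the same induction carries through.

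First I would re-establish, by induction on $j$, the bound $L(I_j) \geq j(B - 1/2)\abs{I_j}$ at the end of phase $j$ (cf.~(\ref{eqn:inductive-load})). The base case $j=1$ is as in the proof of Theorem~\ref{thm:det-lb}: after the $B\abs{I_1}$ initial injections and the $\tau_1 = \abs{I_1}/2$ forwarding rounds of phase $1$, in which only the rightmost buffer of $I_1$ can evict packets from $I_1$, we obtain $L(I_1) \geq (B - 1/2)\abs{I_1}$. For the inductive step, the end-of-phase bound for $I_{j-1}$ transfers to the start-of-phase bound $L(I_j) \geq (j-1)(B-1/2)\abs{I_j}$ via the good-phase inequality; the $B\abs{I_j}$ fresh injections then raise this to $\paren{jB - (j-1)/2}\abs{I_j}$, and the $\tau_j = \abs{I_j}/2$ forwarding rounds of phase $j$ remove at most $\abs{I_j}/2$ packets from $I_j$, leaving $L(I_j) \geq j(B - 1/2)\abs{I_j}$.

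After $k = \log_{2B} n$ phases we have $\abs{I_k} = 1$, so the unique buffer $i \in I_k$ satisfies $L(i) \geq k(B - 1/2) = \Omega(B \log_B n)$. The final injection round of $A_i$, inherited from the construction in the proof of Theorem~\ref{thm:det-lb}, then adds $\sigma$ packets into $i$, bringing the load at the epoch's last injection round to $\Omega(B \log_B n + \sigma)$, as required.

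The argument poses no real obstacle: the corollary is in essence the remark that the adaptivity of $A_F$ was exploited only to force the average-load inequality displayed above, and the good-phase hypothesis already supplies this inequality for free. The only points requiring care are bookkeeping — confirming that ``beginning of phase $j$'' and ``end of phase $j-1$'' refer to the same time instant (so that the good-phase inequality feeds directly into the induction) and that during phase $j$ only the rightmost buffer of $I_j$ can evict packets from $I_j$ — both of which are immediate from the path topology and the structure of $\Arand$.
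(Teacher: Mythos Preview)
Your proposal is correct and matches the paper's approach: the paper simply states that the corollary follows immediately from the proof of Theorem~\ref{thm:det-lb}, and you have spelled out precisely why --- namely, that the adaptivity of $A_F$ is invoked only to secure the average-load inequality, which the good-phase hypothesis supplies directly. The bookkeeping points you flag (timing of ``beginning of phase $j$'' versus ``end of phase $j-1$'', and that only the rightmost buffer of $I_j$ can evict packets) are exactly the right ones and are indeed immediate.
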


By Corollary~\ref{cor:lower-bound}, all that remains to prove Theorem~\ref{thm:rand-lb} is to show that each epoch is good with sufficient probability.

\begin{lem}
  \label{lem:good-phase}
  Consider a single epoch of an execution of a protocol $F$ with adversary $\Arand$. Then each phase is good independently with probability at least $1 / 2B$. 
\end{lem}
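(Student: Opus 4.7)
My plan is to exploit the fact that $\Arand$ is \emph{oblivious}: the random index $i \in [n]$ is drawn once and for all at the start of the epoch, independently of the protocol $F$'s behavior. I would begin by writing $i$ in base $2B$ as $i = D_1 D_2 \cdots D_m$ with $m = \log_{2B} n$. Because $i$ is uniform on $[(2B)^m]$, the digits $D_1, \ldots, D_m$ are i.i.d.\ uniform on $[2B]$. In the construction of $A_i$, the interval $I_j$ is exactly the sub-interval $I_{j-1}^{D_j}$ of $I_{j-1}$ indexed by the $j\th$ digit of $i$, so the sequence $I_0 \supseteq I_1 \supseteq \cdots \supseteq I_k$ is determined digit by digit.

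Next I would establish the averaging observation that at the start of every phase $j \geq 1$, some sub-interval of $I_{j-1}$ is necessarily good. Since $I_{j-1} = I_{j-1}^1 \cup \cdots \cup I_{j-1}^{2B}$ is a partition into equal-size parts,
\[
\frac{L(I_{j-1})}{\abs{I_{j-1}}} = \frac{1}{2B} \sum_{\ell=1}^{2B} \frac{L(I_{j-1}^\ell)}{\abs{I_{j-1}^\ell}},
\]
so at least one index $\ell^*$ satisfies $L(I_{j-1}^{\ell^*})/\abs{I_{j-1}^{\ell^*}} \geq L(I_{j-1})/\abs{I_{j-1}}$. Choosing $I_j = I_{j-1}^{\ell^*}$ would make phase $j$ good by definition.

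The key step is then the independence argument. Let $H_{j-1}$ denote the full history at the start of phase $j$, consisting of the injection decisions, forwarding decisions, and all of $F$'s random bits up to that point. This history is a (randomized) function of $D_1, \ldots, D_{j-1}$ and $F$'s internal randomness, but does \emph{not} depend on $D_j, \ldots, D_m$. In particular, $D_j$ is independent of $H_{j-1}$. Conditional on $H_{j-1}$, the partition $I_{j-1}^1, \ldots, I_{j-1}^{2B}$ and their loads are fixed, and $I_j = I_{j-1}^{D_j}$ for a uniformly random $D_j \in [2B]$. Because at least one of these $2B$ sub-intervals is good,
\[
\Pr\bigl(\text{phase } j \text{ good} \sucht H_{j-1}\bigr) \geq \frac{1}{2 B}.
\]
The usual iterated conditioning (tower rule) then gives $\Pr(\text{phases } 1, \ldots, j \text{ all good}) \geq (2B)^{-j}$, which is the independence statement claimed.

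The only subtle point I anticipate is being careful that $F$ may be randomized, so $H_{j-1}$ should include $F$'s random tape; once that is done, the independence of $D_j$ from $H_{j-1}$ is automatic from the oblivious construction, and the averaging step is immediate. No quantitative inequality manipulation is needed beyond the one displayed above.
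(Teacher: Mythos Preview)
Your proof is correct and follows essentially the same approach as the paper's: both use that a uniform $i\in[n]$ induces a uniform-and-independent choice of sub-interval at each phase, combine this with the pigeonhole/averaging observation that at least one of the $2B$ sub-intervals must be good, and conclude the $1/2B$ conditional lower bound. Your explicit base-$2B$ digit decomposition and careful inclusion of $F$'s random tape in the history $H_{j-1}$ make the independence step more transparent than the paper's terser treatment, but the underlying argument is the same.
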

\begin{proof}
  Consider the interval $I_{j-1}$ at the beginning of the $j\th$ phase. There are $2 B$ choices of the sub-interval $I_j$, and each is chosen with equal probability. By the pigeonhole principle, at least one choice $I_j$ is good. Since $A_i$ is chosen uniformly at random with $i \in [n]$, conditioned on $I_{j-1}$, the choice of $I_{j}$ is uniformly at random. Thus, the probability $I_j$ is a good choice is at least $1 / 2B$. Finally, we note that since $i$ is chosen uniformly at random, the choice of \emph{which sub-interval} of $I_{j-1}$ is chosen in phase $j$ is independent of the sub-intervals chosen in other phases.
\end{proof}

\begin{cor}
  \label{cor:good-epoch}
  Each epoch is good independently with probability at least $1 / n$.
\end{cor}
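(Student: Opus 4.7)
The plan is to chain Lemma~\ref{lem:good-phase} across the $k$ phases comprising a single epoch, and then read off the resulting probability. Within one epoch, Lemma~\ref{lem:good-phase} provides that each of the $k$ phases is good \emph{independently} with probability at least $1/(2B)$, so the probability that every phase is good---which is the definition of the epoch being good---is at least $(1/(2B))^k$. To identify $k$, I would back it out from the construction in Definition~\ref{dfn:a-rand}: the intervals satisfy $|I_0|=n$, $|I_j| = |I_{j-1}|/(2B)$, and the chain terminates at $|I_k|=1$, giving $k = \log_{2B} n$. Substituting,
\[
\Pr[\text{epoch is good}] \;\geq\; \paren{\frac{1}{2B}}^{\log_{2B} n} \;=\; \frac{1}{n},
\]
which is exactly the claimed bound.

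For the ``independently'' qualifier across epochs, I would appeal directly to the structure of $\Arand$: step~1 of each iteration in Definition~\ref{dfn:a-rand} draws a fresh $i \in [n]$ uniformly at random, independent of all prior randomness, and the entire nested sequence $I_0 \supseteq I_1 \supseteq \cdots \supseteq I_k = \{i\}$ used during that epoch is a deterministic function of this single draw. Thus whether the epoch is good is measurable with respect to the $\sigma$-algebra generated by $i$ (together with any forwarding-protocol randomness, which by Lemma~\ref{lem:good-phase} does not help the protocol reduce the per-phase success probability below $1/(2B)$), and distinct epochs use independent draws, yielding independence of the corresponding good events.

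The only subtle point, and what I would double-check before declaring the corollary proved, is that the ``independently'' clause in Lemma~\ref{lem:good-phase} really permits multiplying the phase-level probabilities rather than merely applying a union-bound-style argument in one direction; this is what converts a per-phase lower bound into a joint lower bound of $(1/(2B))^k$. Once that is in hand the entire argument is a one-line computation, so I do not anticipate any additional obstacle beyond invoking Lemma~\ref{lem:good-phase} correctly and matching the value of $k$ to $\log_{2B} n$.
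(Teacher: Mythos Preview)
Your proposal is correct and follows essentially the same argument as the paper: invoke Lemma~\ref{lem:good-phase} to obtain per-phase success probability at least $1/(2B)$, multiply across the $k=\log_{2B} n$ phases to get $1/n$, and note that independent fresh draws of $i$ across epochs yield independence of the good-epoch events. Your added remarks on measurability and on why the ``independently'' clause licenses multiplication are more explicit than the paper's version but do not change the route.
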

\begin{proof}
  Each epoch consists of $\log_{2B} n$ phases, and each phase is good independently with probability $1 / 2B$. Therefore, the probability that all phases are good is $(1 / 2B)^{\log_{2B} n} = 1 / n$. By construction, the choices of $i$ used for each epoch are mutually independent.
\end{proof}

\begin{proof}[Proof of Theorem~\ref{thm:rand-lb}]
  The proof of Theorem~\ref{thm:det-lb} implies that each $A_i$ chosen by $\Arand$ is locally $(\rho, \sigma, \beta)$-bounded. Waiting $B n + \sigma$ rounds with no injections between epochs ensures that $\Arand$ is locally $(\rho, \sigma, \beta)$-bounded as well.

  By Corollary~\ref{cor:lower-bound}, if $\Arand$ experiences a good epoch, then it inflicts a buffer load of $\Omega(B \log n + \sigma)$. By Corollary~\ref{cor:good-epoch}, the probability that the first $k$ epochs are not good is at most $(1 - 1 / n)^k$. Taking the limit as $k \to \infty$, we find that $\Pr(\sup_{t, i'} L^t(i') = o(B \log n + \sigma)) = 0$.
\end{proof}

\begin{rem}
  The proof of Theorem~\ref{thm:rand-lb} shows that the probability that none of the first $k$ epochs are good is at most $(1 - 1 / n)^k$. This expression implies that a good epoch occurs with high probability after $k = O(n \log n)$ epochs. Since each epoch lasts $O(B n + \sigma)$ rounds, a good epoch (hence a load of $\Omega(B \log n + \sigma)$) occurs after $O(B n^2 \log n)$ rounds with high probability.
\end{rem}

\end{document}